\documentclass[english,11pt]{article}
\input{macros}

\begin{document}

\title{\Large Learning a Size--Weight Frontier for Synthetic-Augmented Inference}

\author{
	Chengpiao Huang\footnote{Department of IEOR, Columbia University. Email: \href{mailto:chengpiao.huang@columbia.edu}{chengpiao.huang@columbia.edu}.}
	\and Kaizheng Wang\footnote{Department of IEOR and Data Science Institute, Columbia University. Email: \href{mailto:kaizheng.wang@columbia.edu}{kaizheng.wang@columbia.edu}.}
}

\date{\today}

\maketitle

\begin{abstract}
Synthetic data can improve statistical inference when real data are scarce, but na\"{i}vely treating synthetic samples as real data can introduce bias and lead to unreliable inference. We develop a general framework for synthetic-augmented inference across a population of related tasks. It characterizes synthetic augmentation by the number of synthetic observations and their weight. Central to our framework is a size--weight frontier that specifies, for each weight, the largest synthetic sample size for which all smaller sizes attain the target task-marginal coverage. We estimate this frontier from historical tasks, and establish a finite-sample coverage guarantee simultaneously for all size--weight configurations on or below the estimated frontier. In experiments using large language model responses to augment opinion survey data, our procedure achieves target coverage and substantially narrows confidence intervals.
\end{abstract}

\noindent{\bf Keywords:} synthetic data, data augmentation, statistical inference, generative models

\section{Introduction}

Synthetic data are increasingly used to augment real data when they are scarce, costly, or slow to collect. For example, a large language model can generate many responses to a survey question at low cost. While combining synthetic data with a small human sample can reduce estimation variance, the synthetic simulator may systematically deviate from the human population. As a consequence, na\"{i}vely treating synthetic samples as real data can lead to unreliable inference: the resulting confidence intervals may be narrower but fail to maintain valid statistical coverage. 

In this paper, we study how to control the contribution of synthetic data when constructing confidence sets. We consider a population of related tasks, where a task may be a survey question, a customer segment, or a newly arrived user. For the target task, our goal is to construct a confidence set for a task-specific parameter with at least a prescribed coverage probability. We observe $n$ real responses and may generate $k$ synthetic responses. We assign weight one to each real sample, and a nonnegative weight $\weight$ to each synthetic sample. Thus, $k$ controls the number of synthetic samples, and $\weight$ controls their individual influence. Our goal is to determine:
\begin{center}
\emph{For each weight $\weight$, what sample size $k$ can we use while maintaining valid confidence set coverage?}
\end{center}

The valid configurations $(\weight,k)$ depend on the discrepancy between the real and synthetic data distributions, which is generally unknown and may vary across tasks. To learn these configurations, we assume access to historical tasks drawn from the same task population as our target task, and for which more real samples are available. On each historical task, we use the additional real data to construct a reference confidence set. The reference set is then compared with the synthetic-augmented confidence sets to evaluate and calibrate different configurations $(\weight,k)$. This setting captures cold-start problems, including inference for new users or items in recommendation systems, where established tasks with more real data guide how much synthetic information to augment for a new task with limited real data.

\paragraph{Main contributions.}
We introduce a \emph{size--weight frontier} as the central object for calibrating synthetic augmentation in statistical inference. For each weight in a candidate set, the frontier gives the largest synthetic sample size for which all smaller sizes lead to confidence sets with valid coverage. We then develop data-driven procedures that use historical tasks to learn this frontier, and establish a finite-sample coverage guarantee that simultaneously holds for every configuration on or below the learned frontier; see \Cref{fig-frontier-illustration} for an illustration. The uniform coverage allows us to select a configuration after calibration, e.g., by minimizing average confidence-set volume, while maintaining the guarantee.

\begin{figure}[h]
	\centering
	\caption{Illustration of a Size--Weight Frontier\label{fig-frontier-illustration}}
	\includegraphics[width=0.58\linewidth]{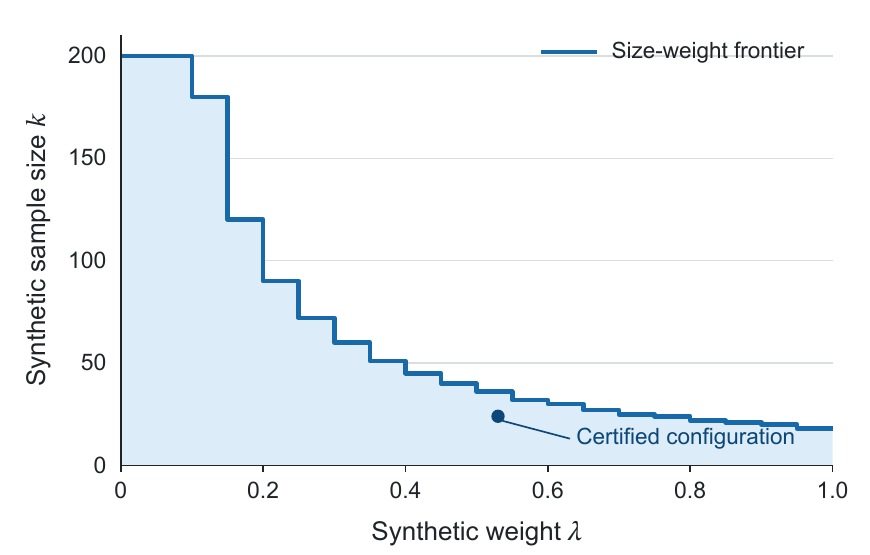}
	\bnotefig{Every size--weight configuration $(\weight,k)$ on or below the frontier leads to confidence sets with valid coverage.}
\end{figure}

\paragraph{Related work.}

Our work contributes to a growing literature on statistical inference with real and synthetic data \citep{SLS23,ABF23,BLL25,KSh25,BLL26,BGL26,AWL26}. Applications include imbalanced learning \citep{NXL24,MZh26}, online decision-making \citep{JPL25}, market research and surveys \citep{WZZ26,YXi26,YLT26,LWZ26}, among others. Our focus is to develop a data-driven framework that uses historical tasks to learn how much synthetic information can be incorporated while maintaining statistical validity. 

Closest to our work, \cite{HWW25} and \cite{TZr26} use historical tasks containing both real and synthetic data to calibrate statistical inference for a target task with only synthetic data. \cite{HWW25} adaptively selects the synthetic sample size, whereas \cite{TZr26} estimates the real--synthetic discrepancy for bias correction. By contrast, we study general data augmentation when the target task may contain real observations and calibrate a family of confidence sets indexed jointly by synthetic sample size and weight.

Combining weighted synthetic observations with real data also arises in Bayesian methods. Catalytic priors \citep{HSR20} use downweighted synthetic data from a simple predictive model to stabilize estimation of a more complex model. AI-powered Bayesian inference \citep{ORo25} constructs and tunes priors from a generative model's outputs. Our method instead targets frequentist coverage, and applies to general user-specified confidence set constructions.

\paragraph{Notation.}
For $n\in \ZZ_+$, let $[n]=\{1,\ldots,n\}$, with the convention $[0]=\varnothing$. For real numbers $a$ and $b$, write $a\wedge b=\min\{a,b\}$. We use $\Ber(\mu)$ to denote the Bernoulli distribution with mean parameter $\mu$. Throughout the paper, $\log$ denotes the natural logarithm.

\section{Problem Setup}\label{sec-problem}

\subsection{A Motivating Example}

Consider a collection of Yes/No survey questions. For a question $\query$, we are interested in estimating $\param_{\query}$, the proportion of the human population that would answer ``Yes''. We observe $n$ independent human responses $\dataset_n=\{\data_i\}_{i=1}^n\stackrel{\mathrm{i.i.d.}}{\sim}\Ber(\param_{\query})$. When the number of real samples $n$ is small, as is the case for new survey questions, estimating $\param_{\query}$ solely based on $\dataset_n$ has large statistical error, and the confidence interval can be wide.

A possible remedy is to augment the scarce real data with synthetic samples. Suppose we can prompt a large language model, with the question $\query$ and randomly generated respondent profiles, to produce $k$ independent synthetic responses $\syndataset_k=\{\syndata_i\}_{i=1}^k$. Then, we can make inference on $\param_{\query}$ using the combined dataset $\dataset_n \cup \syndataset_k$. As the synthetic data can be systematically biased, treating it na\"{i}vely as real samples may produce a narrow interval that misses the population mean $\param_{\query}$. We could maintain coverage validity by limiting the synthetic sample size \citep{SLS23,HWW25} or downweighting the synthetic samples \citep{HSR20,ORo25}. However, there is no general data-driven approach for choosing valid synthetic size--weight configurations. 

Our goal is to identify a frontier that specifies, for each candidate weight, how many synthetic samples can be augmented with the real samples while maintaining valid confidence interval coverage. 

\subsection{Weighted Synthetic Augmentation}

We now introduce the general problem framework, which extends the previous binary-survey example to a general population of tasks, target parameters, and confidence-set constructions. Our framework has two components. First, for each augmentation configuration, a user-specified procedure constructs a candidate confidence set. Then, we will develop methods that use historical tasks to determine which configurations achieve the target coverage for a future task. In this subsection, we focus on the first component.

Let $\{\dist_{\query}:\query\in\queryfam\}$ be a family of probability distributions on a sample space $\outputspace$, where $\query\in\queryfam$ indexes a statistical inference task. The target parameter for task $\query$ is
\[
\param_{\query}=\parammap(\dist_{\query}),
\]
for a specified distributional functional $\parammap$. The parameter may be a mean, a quantile, a vector, or a non-Euclidean object. For example, $\query$ may be a survey question ``What is your annual income?'', $\dist_{\query}$ the income distribution in the target population, and $\param_{\query}$ its $90$th percentile.

A target task $\query$ is drawn from a distribution $\querydist$. Conditional on $\query$, we observe $n$ independent real observations $\dataset_n=\{\data_i\}_{i=1}^n$ from $\dist_{\query}$. We also have access to a simulator that generates independent synthetic observations $\syndataset_k=\{\syndata_i\}_{i=1}^k$ from a synthetic distribution $\syndist_{\query}$. The real and
synthetic samples are conditionally independent given the task. The simulator may be misspecified: in general, $\syndist_{\query}\ne\dist_{\query}$ and $\parammap(\syndist_{\query})\ne\param_{\query}$.

We consider the following synthetic augmentation scheme. Fix a target miscoverage level $\alpha\in(0,1)$ and a maximum synthetic sample size $K\in\ZZ_+$. An \emph{augmentation configuration} is a pair $(\weight,k)$, where $k\in\{0,\ldots,K\}$ is the number of synthetic observations, and $\weight\in[0,\infty)$ is their weight relative to weight one for each real observation. Given $\dataset_n$, $\syndataset_k$ and $\weight$, a user-specified, possibly randomized procedure $\CI$ returns a set estimate of $\param_{\query}$,
\[
\CI(\dataset_n,\syndataset_k,\weight).
\]
This data-explicit notation emphasizes that the confidence set is a random function of the observed samples. We assess a configuration $(\weight, k)$ by its average coverage probability over the task population.

\begin{definition}[Coverage probability]
For $\weight\geq0$ and $k\in\{0,\ldots,K\}$, define
\begin{equation}
p_n(\weight,k)
=
\PP \Big(
\param_{\query}
\in
\CI(\dataset_n,\syndataset_k,\weight)
\Big).
\label{eq-coverage-true}
\end{equation}
Here the probability is taken over the randomness of $\query\sim\querydist$, and the real and synthetic samples $\dataset_n\sim\dist_{\query}$ and $\syndataset_k\sim\syndist_{\query}$ conditional on $\query$.
\end{definition}

Our goal is to identify a large, interpretable set of configurations $S\subseteq[0,\infty)\times\{0,\ldots,K\}$ such that
\begin{equation}
p_n(\weight,k)
\geq1-\alpha,
\qquad
\text{for every }(\weight,k)\in S.
\label{eq-valid-region}
\end{equation}

\begin{remark}[Marginal coverage]
We note that the coverage probability \eqref{eq-coverage-true} is marginal over the task distribution $\query\sim\querydist$. It is weaker than task-conditional coverage, which would require
\[
\PP 
\Big(
\param_{\query}
\in
\CI(\dataset_n,\syndataset_k,\weight)
~\Big|~
\query
\Big)
\geq1-\alpha
\]
for every task $\query$. For example, a coverage guarantee averaged across survey questions does not imply that the target coverage is achieved for every individual question.
\end{remark}

We place no structural assumption on $\CI$. The only requirement is that whenever there is no synthetic contribution (that is, when the synthetic sample size $k$ or weight $\weight$ is zero), it attains marginal coverage of at least $1-\alpha$.

\begin{assumption}[Real-only coverage]\label{assumption-fallback}
For every $\weight\geq0$ and $k\in\{0,\ldots,K\}$,
\[
p_n(\weight,0)\geq1-\alpha
\qquad\text{and}\qquad
p_n(0,k)\geq1-\alpha.
\]
\end{assumption}

The following two examples give instantiations of $\CI$ based on standard confidence-set procedures. The first considers confidence intervals for a scalar mean, while the second describes a bootstrap construction for a general parameter. We note that our calibration framework to be introduced later is not restricted to scalar or Euclidean parameters.

\begin{example}[Mean estimation]\label{example-mean}
Consider a univariate distribution $\dist_{\query}$ with finite second moment and target mean $\param_{\query}=\EE_{\data\sim\dist_{\query}}[\data]$. Let $\samplemean_n$ and $\samplesd_n$ denote the sample mean and sample standard deviation of $n\geq2$ real observations. Under standard central limit theorem (CLT) conditions, the interval
\begin{equation}
\left[
\samplemean_n-z_{1-\alpha/2}\frac{\samplesd_n}{\sqrt n},
\quad
\samplemean_n+z_{1-\alpha/2}\frac{\samplesd_n}{\sqrt n}
\right]
\label{eqn-CI-CLT-real}
\end{equation}
has asymptotic coverage $1-\alpha$, where $z_{1-\alpha/2}$ is the $(1-\alpha/2)$-quantile of the standard normal distribution. However, it may not satisfy Assumption~\ref{assumption-fallback} for finite $n$.

If $\dist_{\query}$ is supported on a known interval $[a,b]$, an empirical Bernstein confidence interval
provides an alternative with finite-sample guarantees \citep{MPo09}. It replaces the half-width $z_{1-\alpha/2} \samplesd_n / \sqrt n $ in \eqref{eqn-CI-CLT-real} with
\begin{equation}
\samplesd_n\sqrt{\frac{2\log(4/\alpha)}{n}}
+
\frac{7(b-a)\log(4/\alpha)}{3(n-1)}.
\label{eqn-CI-Bernstein-real}
\end{equation}

We next augment the real sample with $k\geq1$ synthetic observations. Let $\samplesynmean_k$ and $\samplesynsd_k$ denote their sample mean and sample standard deviation; set $\samplesynsd_1=0$ by convention. The weighted point estimator is
\begin{align}
\paramhat_{\query}(\weight,k)
&=
\frac{
\sum_{i=1}^n\data_i
+\weight\sum_{i=1}^k\syndata_i
}{n+\weight k}
=
\frac{n}{n+\weight k}\samplemean_n
+
\frac{\weight k}{n+\weight k}\samplesynmean_k.
\label{eq-weighted-mean}
\end{align}
Direct calculation yields
\[
\var \big[ \paramhat_{\query}(\weight,k) \big]
=
\frac{
n\var_{\data\sim\dist_{\query}}(\data)
+\weight^2k\var_{\syndata\sim\syndist_{\query}}(\syndata)
}{(n+\weight k)^2}.
\]
A plug-in estimate of this variance is
\begin{equation}
\samplevar_{\query}(\weight,k)
=
\frac{
n\samplesd_n^2
+\weight^2k(\samplesynsd_k)^2
}{(n+\weight k)^2}.
\label{eq-weighted-variance}
\end{equation}
Based on this, we use hybrid data to construct a CLT confidence interval
\[
\CI^{\mathsf C}(\dataset_n,\syndataset_k,\weight)
=
\left[
\paramhat_{\query}(\weight,k)
-z_{1-\alpha/2}\sqrt{\samplevar_{\query}(\weight,k)},
\quad
\paramhat_{\query}(\weight,k)
+z_{1-\alpha/2}\sqrt{\samplevar_{\query}(\weight,k)}
\right],
\]
and an empirical Bernstein interval
\begin{align}
\CI^{\mathsf B}(\dataset_n,\syndataset_k,\weight)
&=
\left[
\paramhat_{\query}(\weight,k)-r_{\query}(\weight,k),
\quad
\paramhat_{\query}(\weight,k)+r_{\query}(\weight,k)
\right]
\cap[a,b],
\label{eqn-CI-Bern-aug}
\end{align}
with
\begin{align}
	r_{\query}(\weight,k)
	&=
	\sqrt{
		2\samplevar_{\query}(\weight,k)\log(4/\alpha)
	}
	+
	\frac{
		7(b-a) \log(4/\alpha)
	}{
		3 (n + \weight k - 1)
	}.
	\label{eqn-margin-Bern-aug}
\end{align}
When $\weight=0$, they reduce to the corresponding real-data-only confidence intervals.
\end{example}

\begin{example}[Bootstrap confidence sets for general parameters]\label{example-bootstrap}
For a general functional $\param_{\query}=\parammap(\dist_{\query})$, define the
weighted empirical distribution
\[
\widehat P_{\query,\weight,k}
=
\frac{1}{n+\weight k}
\left(
\sum_{i=1}^n\delta_{\data_i}
+\weight\sum_{i=1}^k\delta_{\syndata_i}
\right)
\]
and the point estimate
$\paramhat_{\query}(\weight,k)=\parammap(\widehat
P_{\query,\weight,k})$. To quantify uncertainty, we generate $B$ bootstrap replicates. In replicate $b=1,\ldots,B$, draw $\{\databoot_i\}_{i=1}^n$ with replacement from $\dataset_n$ and $\{\syndataboot_i\}_{i=1}^k$ with replacement from $\syndataset_k$. Compute
\[
\paramboot_{\query}(\weight,k)
=
\parammap\!\left[
\frac{1}{n+\weight k}
\left(
\sum_{i=1}^n\delta_{\databoot_i}
+\weight\sum_{i=1}^k\delta_{\syndataboot_i}
\right)
\right].
\]
The bootstrap replicates can be converted into a set estimate
$\CI(\dataset_n,\syndataset_k,\weight)$ using a percentile, basic, studentized, or any other problem-specific construction.
\end{example}

\subsection{The Oracle Size--Weight Frontier}

The preceding examples show how to construct a confidence set for a given configuration $(\weight,k)$. We now define the population benchmark that identifies the $(\weight,k)$ configurations attaining the target coverage level. 

Choose a finite ordered collection of candidate weights, independently of the calibration data:
\begin{equation}
\Lambda
=
\{\weight_m\}_{m=0}^{M},
\qquad
0=\weight_0<\weight_1<\cdots<\weight_M<\infty.
\label{eq-weight-grid}
\end{equation}
We will work with configurations whose weights belong to this prespecified collection $\Lambda$. At zero weight $\weight_0=0$, synthetic data do not contribute to the confidence interval. By Assumption \ref{assumption-fallback}, $p_n(\weight_0,k)\ge 1-\alpha$ for all $k\le K$, and we set $k^*(\weight_0)=K$. For $m\in[M]$, we recursively define
\begin{equation}
k^*(\weight_m)
=
\max\left\{
k\in\{0,\ldots,k^*(\weight_{m-1})\}:
p_n(\weight_m,i)\geq1-\alpha
\ \text{for every }i\in[k]
\right\}.
\label{eqn-oracle}
\end{equation}
That is, $k^*(\weight_m)$ is the largest synthetic sample size $k$ at weight $\weight_m$ for which every sample size up to $k$ attains the target $(1-\alpha)$ coverage. Moreover, it is constrained not to exceed its counterpart $k^*(\weight_{m-1})$ at the previous weight $\weight_{m-1}$. We note that $k^*(\weight_m)$ is well-defined because $k=0$ is always feasible under the convention $[0]=\varnothing$. This leads to the \emph{oracle size--weight frontier}
\[
\left\{ \big(\weight,k^*(\weight) \big):\weight\in\Lambda\right\}.
\] 

Two conservative restrictions are built into this definition: (i) at a fixed weight, the frontier includes a sample size only if every smaller positive sample size is also valid; (ii) across the ordered weights, the recursion forces $k^*(\weight_m)\leq k^*(\weight_{m-1})$. Consequently, every $(\weight,k)$ configuration that lies on or below the frontier attains the target coverage:
\[
p_n(\weight,k)\geq1-\alpha,
\qquad
\forall \weight\in\Lambda,
\quad
0\leq k\leq k^*(\weight).
\]
As the task-marginal probabilities $p_n(\weight,k)$ are unknown, the oracle frontier \eqref{eqn-oracle} is not directly available. We will now develop data-driven approaches that use historical tasks to learn this frontier, and quantify the coverage error of the learned frontier.

\section{Learning the Size--Weight Frontier}\label{sec-method}

Suppose we observe $J$ historical tasks
$\query_1,\ldots,\query_J$ drawn independently from $\querydist$. For task $\query_j$, we observe $n_j$ real samples
\[
\dataset_j
=
\{\data_{j,i}\}_{i=1}^{n_j},
\qquad
\data_{j,i}\stackrel{\mathrm{i.i.d.}}{\sim}\dist_{\query_j},
\]
and generate $K$ synthetic samples
\[
\syndataset_j
=
\{\syndata_{j,i}\}_{i=1}^{K},
\qquad
\syndata_{j,i}\stackrel{\mathrm{i.i.d.}}{\sim}\syndist_{\query_j}.
\]
We refer to $\dataset_{\mathrm{cal}}
=
\{(\query_j,\dataset_j,\syndataset_j)\}_{j=1}^J$ as the \emph{calibration data}, which we will use to learn the frontier. We make the following assumption.

\begin{assumption}[Independent calibration data]\label{assumption-independence}
The sample sizes $n_1,\ldots,n_J$ are deterministic and satisfy $\min_{j\in[J]}n_j\geq n$. Conditional on the tasks $\query_1,\ldots,\query_J$, the two collections $\{\dataset_j\}_{j=1}^J$ and $\{\syndataset_j\}_{j=1}^J$ are independent, and the pairs $(\dataset_j,\syndataset_j)$ are independent across $j\in[J]$.
\end{assumption}

For each historical task $\query_j$, we construct a real dataset $\dataset_{j,n}=\{\data_{j,i}\}_{i=1}^n$ of size $n$, and a synthetic dataset $\syndataset_{j,k}=\{\syndata_{j,i}\}_{i=1}^k$ of size $k$. Define a set estimate
\begin{equation}
\CI_{j,n}(\weight,k)
=
\CI(\dataset_{j,n},\syndataset_{j,k},\weight).
\label{eq-historical-candidate}
\end{equation}
Write
$\param_j=\parammap(\dist_{\query_j})$ for the unknown target parameter on task
$j$. Intuitively, we would like to choose configurations $(\weight, k)$ such that the empirical coverage on the $J$ historical tasks achieves the nominal level:
\begin{align}
\frac{1}{J} \sum_{j=1}^J \one \Big(
\param_j \in \CI_{j,n}(\weight,k)
\Big) \geq 1 - \alpha.
\end{align}
However, as the true parameter $\param_j$ is unknown, one cannot directly check whether $\param_j \in \CI_{j,n}(\weight,k)$ holds. To resolve this issue, we use the real data $\dataset_j$ to construct a reference confidence set $\CI_j$ for $\param_j$ as a (noisy) proxy, and then decide based on the empirical frequency of inclusion $\CI_j \subseteq \CI_{j,n}(\weight,k)$. Below we introduce two constructions, both of which use a constant parameter $\gamma\in(0,1)$.

\paragraph{Full-sample proxy.}
Using the full real dataset $\dataset_j$, construct a reference set $\CI_j^{\full}$ satisfying
\begin{equation}
\PP \Big( \param_j\in\CI_j^{\full} \Big)
\geq1-\gamma\alpha.
\label{eq-full-reference-coverage}
\end{equation}
The probability in \eqref{eq-full-reference-coverage} is taken over the task $\query_j$ and its real data $\dataset_j$. It is sufficient but not necessary to have the task-conditional guarantee
\[
\PP \Big( \param_j\in\CI_j^{\full}
~\Big|~
\query_j \Big)
\geq1-\gamma\alpha.
\]
We define the \emph{full-sample coverage proxy}
\begin{equation}
\widehat p_n^{\full}(\weight,k)
=
\frac1J\sum_{j=1}^J
\one \Big(
\CI_j^{\full}\subseteq\CI_{j,n}(\weight,k)
\Big).
\label{eq-proxy-full}
\end{equation}

\paragraph{Split-sample proxy.}
Alternatively, when $n_j>n$ for all $j$, we can use only the held-out real observations $\dataset_j\setminus\dataset_{j,n}$ to construct a reference set $\CI_j^{\splt}$ satisfying the task-conditional guarantee
\begin{equation}
\PP \Big( \param_j\in\CI_j^{\splt}
~\Big|~
\query_j
\Big)
\geq1-\gamma.
\label{eq-split-reference-coverage}
\end{equation}
Due to data splitting, this reference set is conditionally independent of $\CI_{j,n}(\weight,k)$ given $\query_j$. We define the \emph{split-sample coverage proxy}
\begin{equation}
\widehat p_n^{\splt}(\weight,k)
=
\frac1J\sum_{j=1}^J
\one \Big( 
\CI_j^{\splt}\subseteq\CI_{j,n}(\weight,k)
\Big).
\label{eq-proxy-split}
\end{equation}

The following lemma relates both proxies to the true coverage probability in \eqref{eq-coverage-true}.

\begin{lemma}[From set containment to coverage]\label{lem-proxies}
For every $(\weight,k)\in[0,\infty)\times\{0,\ldots,K\}$,
\[
p_n(\weight,k)
\geq
\EE\widehat p_n^{\full}(\weight,k)-\gamma\alpha
\qquad\text{and}\qquad
p_n(\weight,k)
\geq
\frac{
\EE\widehat p_n^{\splt}(\weight,k)-\gamma
}{1-\gamma}.
\]
\end{lemma}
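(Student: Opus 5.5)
Both inequalities follow from elementary event inclusions applied to a single historical task. The only ingredient besides these inclusions is that each historical configuration reproduces the distribution of the target configuration. Fix $(\weight,k)$. Since $\widehat p_n^{\full}$ and $\widehat p_n^{\splt}$ are averages of $J$ indicators, linearity of expectation gives $\EE\widehat p_n^{\full}(\weight,k)=\frac1J\sum_j\PP(\CI_j^{\full}\subseteq\CI_{j,n}(\weight,k))$, and similarly for the split proxy. It therefore suffices to bound each summand by the corresponding expression in $p_n(\weight,k)$.

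\textbf{Distributional identity.} Under Assumption~\ref{assumption-independence}, for each $j$ the triple $(\query_j,\dataset_{j,n},\syndataset_{j,k})$ has the same joint law as $(\query,\dataset_n,\syndataset_k)$ in the definition \eqref{eq-coverage-true}. Indeed, $\query_j\sim\querydist$; conditional on $\query_j$, the first $n$ real draws are i.i.d.\ from $\dist_{\query_j}$ and the first $k$ synthetic draws are i.i.d.\ from $\syndist_{\query_j}$, and the two blocks are conditionally independent. The internal randomization of $\CI$ is applied in the same way. Hence $\PP(\param_j\in\CI_{j,n}(\weight,k))=p_n(\weight,k)$ for every $j$. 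This is the step I would verify most carefully, in particular that the conditional independence in Assumption~\ref{assumption-independence} matches the target-task model.

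\textbf{Full-sample bound.} If $\CI_j^{\full}\subseteq\CI_{j,n}(\weight,k)$ and $\param_j\in\CI_j^{\full}$, then $\param_j\in\CI_{j,n}(\weight,k)$. Hence
\[
\{\CI_j^{\full}\subseteq\CI_{j,n}\}\subseteq\{\param_j\in\CI_{j,n}\}\cup\{\param_j\notin\CI_j^{\full}\}.
\]
A union bound together with \eqref{eq-full-reference-coverage} gives $\PP(\CI_j^{\full}\subseteq\CI_{j,n})\le p_n(\weight,k)+\gamma\alpha$. Averaging over $j$ yields the first claim. No independence between the reference set and the candidate set is needed here, which is why the marginal guarantee \eqref{eq-full-reference-coverage} suffices.

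\textbf{Split-sample bound.} This is the main point, and it needs the conditional independence, because the target is a multiplicative rather than additive bound. Condition on $\query_j$. The same inclusion gives
\[
\{\CI_j^{\splt}\subseteq\CI_{j,n}\}\cap\{\param_j\in\CI_j^{\splt}\}\subseteq\{\param_j\in\CI_{j,n}\}.
\]
Hence
\[
\PP(\param_j\in\CI_{j,n}\mid\query_j)\ge \PP(\CI_j^{\splt}\subseteq\CI_{j,n},\,\param_j\in\CI_j^{\splt}\mid\query_j).
\]
To lower-bound the right side, write $A=\{\CI_j^{\splt}\subseteq\CI_{j,n}\}$ and $B=\{\param_j\in\CI_j^{\splt}\}$. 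Then $\PP(A\cap B\mid\query_j)=\PP(A\mid\query_j)-\PP(A\cap B^c\mid\query_j)$.

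The naive estimate $\PP(A\cap B^c\mid\query_j)\le\gamma$ only yields the weaker additive bound. Instead, I would separate $\{\param_j\in\CI_{j,n}\}$ into two pieces. On $\{\param_j\in\CI_{j,n}\}$, bound trivially: $\PP(A\cap\{\param_j\in\CI_{j,n}\}\mid\query_j)\le q$, where $q=\PP(\param_j\in\CI_{j,n}\mid\query_j)$. On $\{\param_j\notin\CI_{j,n}\}$, containment $A$ forces $\param_j\notin\CI_j^{\splt}$. The event $\{\param_j\notin\CI_{j,n}\}$ depends only on $(\dataset_{j,n},\syndataset_{j,k})$, while $B^c$ depends only on the held-out data, and these are conditionally independent given $\query_j$. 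Therefore
\[
\PP(A\cap\{\param_j\notin\CI_{j,n}\}\mid\query_j)\le(1-q)\gamma.
\]
Adding the two pieces gives $\PP(A\mid\query_j)\le q+\gamma(1-q)=\gamma+(1-\gamma)q$, that is, $q\ge(\PP(A\mid\query_j)-\gamma)/(1-\gamma)$. This inequality is affine in $q$ and in $\PP(A\mid\query_j)$. Taking expectation over $\query_j$, using the distributional identity, and averaging over $j$ then gives the second claim.
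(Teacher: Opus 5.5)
Your proposal is correct and follows the paper's proof. The full-sample bound uses the same inclusion and union bound. Your split-sample bound $\PP(A)\le \PP(\param_j\in\CI_{j,n})+\PP(\param_j\notin\CI_j^{\splt},\,\param_j\notin\CI_{j,n})$ is the complement form of the paper's $\PP(A^c)\ge\PP(\param_j\in\CI_j^{\splt},\,\param_j\notin\CI_{j,n})\ge(1-\gamma)[1-p_n(\weight,k)]$, and it relies on the same conditional-independence factorization given $\query_j$. Your per-task averaging over $j$ is slightly more careful than the paper's identification of $\EE\widehat p_n$ with a single-task probability, because the $n_j$, and hence the reference sets, need not be identically distributed across tasks.
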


\begin{proof}[Proof of \Cref{lem-proxies}]
See \Cref{sec-proofs-proxies}.
\end{proof}

Let $\widehat p_n$ denote either $\widehat p_n^{\full}$ or $\widehat p_n^{\splt}$. According to \Cref{lem-proxies}, $\EE\widehat p_n(\weight,k)\geq 1-(1-\gamma)\alpha$ implies $p_n(\weight,k)\geq1-\alpha$. Thus, we will use $1-(1-\gamma)\alpha$ as the acceptance threshold for $\widehat p_n$, which is slightly higher than the nominal level.

Based on the observation above, we propose to estimate the frontier $k^*$ by replacing the unknown coverage probabilities in \eqref{eqn-oracle} by their empirical proxies. Set $\widehat k(\weight_0)=K$. For $m\in[M]$, we recursively define
\begin{align}
\widehat k(\weight_m)
=
\max\Big\{
&k\in\{0,\ldots,\widehat k(\weight_{m-1})\}:~
\widehat p_n(\weight_m,i)\geq
1-(1-\gamma)\alpha
\ \text{for every }i\in[k]
\Big\}.
\label{eq-frontier-est}
\end{align}
The set in \eqref{eq-frontier-est} is nonempty because $k=0$ is always feasible. The recursive upper bound makes $\widehat k$ non-increasing across the ordered weights. \Cref{alg-frontier-learning} summarizes the procedure.

\begin{algorithm}[t]
\caption{Learning the size--weight frontier}\label{alg-frontier-learning}
\begin{algorithmic}[1]
\REQUIRE Calibration data $\dataset_{\mathrm{cal}}$, set estimator $\CI$, grid of weights $\Lambda$, simulation budget $K$, nominal miscoverage probability $\alpha \in (0,1)$, proxy type
$q\in\{\full,\splt\}$, constant $\gamma \in (0,1)$.
\STATE Set $\tau=1-(1-\gamma)\alpha$.
\FOR{$j=1,\ldots,J$}
    \STATE Construct the reference set $\CI_j^q$ and the candidate sets
    $\CI_{j,n}(\weight_m,k)$ for $m\in[M]$ and $k\in[K]$.
\ENDFOR
\STATE Compute $\widehat p_n^q(\weight_m,k)$ using
\eqref{eq-proxy-full} or \eqref{eq-proxy-split}.
\STATE Set $\widehat k(\weight_0)=K$.
\FOR{$m=1,\ldots,M$}
    \STATE Compute $\widehat k(\weight_m)$ from \eqref{eq-frontier-est}.
\ENDFOR
\RETURN $\widehat k:\Lambda\to\{0,\ldots,K\}$
\end{algorithmic}
\end{algorithm}

\begin{remark}[Comparison of proxies]
The full-sample proxy uses all available real observations and requires only marginal coverage validity of its reference sets. However, as its reference level is $1-\gamma\alpha$, the reference set $\CI_j^{\full}$ can be wide, making containment $\CI_j^{\full}\subseteq\CI_{j,n}(\weight,k)$ harder to achieve. The split-sample proxy uses the lower reference level $1-\gamma$ and can reuse the same reference set $\CI_j^{\splt}$ when the target level $\alpha$ changes. The cost is that it reserves data for the reference set and requires task-conditional validity.
\end{remark}

\begin{remark}[The case of no real data]
When $n=0$, many confidence-set constructions are invariant to the value of the synthetic weight $\weight> 0$, because all synthetic samples receive the same weight. In that case, we only need to select the synthetic sample size $k$, and the problem reduces to the one studied by \citet{HWW25}. Their general algorithm coincides with \Cref{alg-frontier-learning} using the split-sample proxy.
\end{remark}

We next establish a simultaneous coverage guarantee for all $(\weight,k)$ configurations on or below the learned frontier $\widehat{k}$.

\begin{theorem}[Uniform coverage]
\label{thm-uniform-coverage}
Fix $n\in\NN$, $K\in\ZZ_+$, $\delta\in(0,1)$, and the finite collection
$\Lambda$ in \eqref{eq-weight-grid}. Let
Assumptions~\ref{assumption-fallback} and \ref{assumption-independence} hold, and let $\widehat k$ be the output of \Cref{alg-frontier-learning}. Set $c=1$ for the full-sample proxy and
$c=(1-\gamma)^{-1}$ for the split-sample proxy. With probability at least
$1-\delta$ over $\dataset_{\mathrm{cal}}$, we have
\begin{equation}
\min_{\substack{
\weight\in\Lambda\\
0\leq k\leq\widehat k(\weight)
}}
p_n(\weight,k)
\geq
1-\alpha
-c\sqrt{
\frac{\log\{(K\wedge M)/\delta\}}{2J}
}.
\label{eq-uniform-coverage-bound}
\end{equation}
\end{theorem}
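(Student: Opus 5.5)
The plan is a union bound over a small, deterministic set of "critical" configurations, followed by Hoeffding's inequality and \Cref{lem-proxies}. Set $t=\sqrt{\log\{(K\wedge M)/\delta\}/(2J)}$ and call a configuration $(\weight_m,k)$ \emph{bad} if $p_n(\weight_m,k)<1-\alpha-ct$. The bad set is determined by the population quantities $p_n$ and does not depend on $\dataset_{\mathrm{cal}}$. By Assumption~\ref{assumption-fallback}, no configuration with $m=0$ or $k=0$ is bad, so all bad points lie in $[M]\times[K]$.

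\emph{Step 1 (order structure).} Order $[M]\times[K]$ by the product order: $(m',k')\preceq(m,k)$ if $m'\le m$ and $k'\le k$. The learned region $\widehat R=\{(\weight_m,k):1\le m\le M,\ 1\le k\le\widehat k(\weight_m)\}$ is a down-set in this order. Indeed, at fixed $m$ it contains every $k'\le k$, and $\widehat k$ is non-increasing in $m$ by the recursion \eqref{eq-frontier-est}. The recursion also gives
\[
\widehat p_n(\weight_m,k)\ge\tau=1-(1-\gamma)\alpha \quad\text{for every }(\weight_m,k)\in\widehat R .
\]
Let $\mathcal A$ be the set of $\preceq$-minimal bad points. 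Any two minimal points are incomparable, so $\mathcal A$ is an antichain, and distinct elements of $\mathcal A$ have distinct $m$-coordinates and distinct $k$-coordinates. Hence $|\mathcal A|\le K\wedge M$. Suppose $\widehat R$ contains some bad point. Then it contains a minimal bad point $b\preceq$ that point, because $\widehat R$ is a down-set, and therefore $\widehat p_n(b)\ge\tau$. It thus suffices to show that, with probability at least $1-\delta$, $\widehat p_n(b)<\tau$ for every $b\in\mathcal A$.

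\emph{Step 2 (concentration at fixed points).} Fix $b=(\weight,k)\in\mathcal A$. By Assumption~\ref{assumption-independence}, together with the i.i.d.\ draw of the historical tasks, the triples $(\query_j,\dataset_j,\syndataset_j)$ are i.i.d.\ across $j$. So $\widehat p_n(b)$ is an average of $J$ i.i.d.\ indicators: the $j$-th indicator is $\one(\CI_j^q\subseteq\CI_{j,n}(\weight,k))$, and it depends only on task $j$'s data and its independent internal randomization. Hoeffding's inequality gives
\[
\PP\big(\widehat p_n(b)\ge \EE\widehat p_n(b)+t\big)\le e^{-2Jt^2}=\delta/(K\wedge M).
\]
A union bound over $\mathcal A$ shows that, with probability at least $1-\delta$, $\widehat p_n(b)<\EE\widehat p_n(b)+t$ for all $b\in\mathcal A$.

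\emph{Step 3 (contradiction via \Cref{lem-proxies}).} On this event, suppose some $b\in\mathcal A$ has $\widehat p_n(b)\ge\tau$. Then $\EE\widehat p_n(b)>\tau-t$.
\begin{itemize}
\item For the full-sample proxy, \Cref{lem-proxies} gives $p_n(b)>\tau-t-\gamma\alpha=1-\alpha-t$.
\item For the split-sample proxy, \Cref{lem-proxies} gives $p_n(b)>(\tau-t-\gamma)/(1-\gamma)=1-\alpha-t/(1-\gamma)$.
\end{itemize}
In both cases $p_n(b)>1-\alpha-ct$, which contradicts $b$ being bad. Hence $\widehat R$ contains no bad point. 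The points with $m=0$ or $k=0$ are covered by Assumption~\ref{assumption-fallback}, and together these give \eqref{eq-uniform-coverage-bound}.

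The main obstacle is Step 1: reducing a data-dependent region to a union bound over only $K\wedge M$ fixed configurations, instead of all $MK$. The antichain of minimal bad points is the device I would use for this. It relies on two things: $\widehat R$ being a down-set, which comes from both conservative restrictions built into \eqref{eq-frontier-est}, and $\mathcal A$ being deterministic. If $\mathcal A$ is empty, the claim is immediate.
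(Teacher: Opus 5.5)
Your proof is correct and is essentially the paper's argument in order-theoretic language. The paper's critical configurations $(\weight,\bar k(\weight)+1)$, $\weight\in\Omega$, are exactly the minimal elements of the proxy-level bad set $\{(\weight_m,k): m\in[M],\ k\in[K],\ \EE\widehat p_n(\weight_m,k)<1-(1-\gamma)\alpha-t\}$, and its claim that $\widehat k\le\bar k$ on $\Omega$ extends to all of $\Lambda$ is your down-set step; you differ only in defining badness through $p_n$ directly and applying \Cref{lem-proxies} by contraposition at the minimal points (a small correction: since the $n_j$ may differ, the summands of $\widehat p_n(b)$ are independent but not necessarily identically distributed, which is all Hoeffding's inequality needs).
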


\begin{proof}[Proof of \Cref{thm-uniform-coverage}]
See \Cref{sec-thm-uniform-coverage-proof}.
\end{proof}

\Cref{thm-uniform-coverage} states that with high probability, all configurations on or below the learned frontier $\widehat{k}$ attain the target coverage level $(1-\alpha)$, up to an error of order $J^{-1/2}$. The logarithmic factor in the error term depends on $K\wedge M$. Thus, once the finite grid has at least $K+1$ weights (that is, $M\ge K$), refining it further does not increase the statistical penalty. 

Thanks to the uniform coverage guarantee, any configuration subsequently selected on or below the learned frontier retains coverage validity. In particular, the same calibration data can be used first to learn the frontier and then to select a configuration within the learned region, without incurring an additional coverage penalty. For example, one may choose a configuration that minimizes the average confidence-set volume across the calibration tasks. The following result gives a coverage bound for any selected configuration, where the coverage probability averages over both the calibration data and an independent future task.

\begin{corollary}[Calibration-data-dependent selection]
\label{cor-marginal-coverage}
Consider the setting of \Cref{thm-uniform-coverage} with
$K\wedge M\geq2$. Let
$(\widetilde\weight,\widetilde k)$ be any possibly randomized function of the
calibration data satisfying
\[
\widetilde\weight\in\Lambda,
\qquad
0\leq\widetilde k\leq\widehat k(\widetilde\weight).
\]
For an independent future task and its data,
\[
\PP \Big( 
\param_{\query}
\in
\CI(\dataset_n,\syndataset_{\widetilde k},\widetilde\weight)
\Big)
\geq
1-\alpha
-c\sqrt{
\frac{2\log(K\wedge M)}{J}
}.
\]
\end{corollary}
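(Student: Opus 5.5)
The plan is to integrate the high-probability bound of \Cref{thm-uniform-coverage} over the calibration data, choosing $\delta$ to balance its two sources of error.

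\textbf{Step 1: condition on the calibration data.} The future task $\query$, its real data $\dataset_n$ and its synthetic data are independent of $\dataset_{\mathrm{cal}}$. The selection $(\widetilde\weight,\widetilde k)$ is measurable with respect to $\dataset_{\mathrm{cal}}$ and any auxiliary randomization, which is also independent of the future task. Conditional on $(\dataset_{\mathrm{cal}},\text{randomization})$, the coverage probability therefore equals $p_n(\widetilde\weight,\widetilde k)$. The tower property then gives
\[
\PP\Big(\param_{\query}\in\CI(\dataset_n,\syndataset_{\widetilde k},\widetilde\weight)\Big)=\EE\, p_n(\widetilde\weight,\widetilde k).
\]

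\textbf{Step 2: apply the uniform bound on a good event.} Let $E_\delta$ be the event in \Cref{thm-uniform-coverage}, so $\PP(E_\delta)\ge1-\delta$. Because $(\widetilde\weight,\widetilde k)$ lies on or below $\widehat k$, on $E_\delta$ we have
\[
p_n(\widetilde\weight,\widetilde k)\ge 1-\alpha-c\sqrt{\log\{(K\wedge M)/\delta\}/(2J)}.
\]
This is exactly where uniformity is needed: it makes the data-dependent choice harmless. Off $E_\delta$ we use $p_n\ge0$. Writing $L=K\wedge M$ and $\varepsilon=c\sqrt{\log(L/\delta)/(2J)}$, this yields
\[
\EE\,p_n(\widetilde\weight,\widetilde k)\ge(1-\delta)(1-\alpha-\varepsilon)\ge 1-\alpha-\varepsilon-\delta .
\]

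\textbf{Step 3: choose $\delta$.} This is the only delicate step, because the constant must come out as $\sqrt{2\log L/J}$. Start from the target $c\sqrt{2\log L/J}=2c\sqrt{\log L/(2J)}$. The first attempt is $\delta=1/L$, which gives $\log(L/\delta)=2\log L$ and hence
\[
\varepsilon=c\sqrt{\log L/J}=\sqrt2\,c\sqrt{\log L/(2J)}.
\]
The remaining slack, $(2-\sqrt2)\,c\sqrt{\log L/(2J)}$, must then absorb $\delta=1/L$, and this fails when $J$ is large.

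The better choice is $\delta=c\sqrt{\log L/(2J)}$, truncated to $(0,1)$. If this quantity is at least $1$, the claimed bound is trivial because its right-hand side is then at most $1-\alpha-2<0$. Otherwise, using $c\ge1$,
\[
\log(L/\delta)=\log L+\tfrac12\log\frac{2J}{c^2\log L}\le\log L+\tfrac12\log\frac{2J}{\log L}.
\]

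To finish, I would handle the cases $J\le L^{2}$ and $J>L^{2}$ separately. When $J\le L^2$, the logarithmic term is at most a constant multiple of $\log L$, so $\log(L/\delta)\le\log L\cdot(1+O(1))$, and this should give $\varepsilon+\delta\le 2c\sqrt{\log L/(2J)}$. The assumption $L\ge2$ ensures $\log L\ge\log 2>0$. When $J>L^{2}$, I would instead compare against the trivial bound or use $\delta=1/L$ directly. I expect the bookkeeping in this last step to be the main obstacle; if the exact constant $\sqrt2$ does not come out cleanly, I would choose $\delta=\sqrt{\log L/(2J)}\wedge L^{-1}$ and track constants carefully.
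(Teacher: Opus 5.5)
Your Steps 1 and 2 are correct, and Step 1 is also how the paper starts. Step 3 cannot be completed, though: once $J$ is large, no single $\delta$ in $\EE\,p_n(\widetilde\weight,\widetilde k)\ge 1-\alpha-\varepsilon(\delta)-\delta$ yields the stated bound.

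Write $L=K\wedge M$ and $u=c\sqrt{\log L/(2J)}$, so you need some $\delta\in(0,1)$ with $\varepsilon(\delta)+\delta\le 2u$. Because $\varepsilon(\delta)>u$ for every $\delta<1$, this requires $\delta<u$. Then $\varepsilon(\delta)<2u$ forces $\log(L/\delta)<4\log L$, i.e.\ $\delta>L^{-3}$. So a valid $\delta$ exists only if $u>L^{-3}$, which fails for all $J\ge c^2L^6\log L/2$.

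In that regime the claimed right-hand side is within $2L^{-3}$ of $1-\alpha$, so the trivial bound cannot rescue you. The choice $\delta=1/L$ leaves an additive $1/L$ that does not shrink with $J$, as you observed. Your proposed choice $\delta=u$ fails in every nontrivial case, since $\varepsilon(u)+u>2u$ whenever $u<1$; so the $J\le L^2$ case does not close with it either. The best a single $\delta$ can achieve is a rate of order $\sqrt{\log(LJ)/J}$ (e.g.\ $\delta=J^{-1/2}$), which carries an extra $\log J$.

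The missing idea is to use \Cref{thm-uniform-coverage} at every level $\delta$ and integrate the resulting tail bound. Let $p=\min_{\weight\in\Lambda,\,0\le k\le\widehat k(\weight)}p_n(\weight,k)$. This is a function of the calibration data that does not depend on $\delta$, and by your Step 1 the target probability is at least $\EE p$. With the substitution $t=c\sqrt{\log(L/\delta)/(2J)}$, the theorem gives $\PP(1-\alpha-p>t)\le\min\{1,Le^{-2Jt^2/c^2}\}$ for all $t\ge0$, and the minimum equals $1$ exactly when $t\le u$. Therefore
\begin{align*}
\EE(1-\alpha-p)_+
&=\int_0^\infty\PP(1-\alpha-p>t)\,\mathrm{d}t
\le u+\int_u^\infty\frac{t}{u}\,Le^{-2Jt^2/c^2}\,\mathrm{d}t\\
&=u+\frac{c^2}{4Ju}
=u\Big(1+\frac{1}{2\log L}\Big)
\le 2u,
\end{align*}
where the last step uses $2\log L\ge 2\log 2>1$; this is where $K\wedge M\ge2$ enters. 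Since $p\ge 1-\alpha-(1-\alpha-p)_+$ and $2u=c\sqrt{2\log L/J}$, the corollary follows; this is the paper's proof. Your fixed-$\delta$ argument uses only one quantile of $p$ and pays the worst case $p\ge0$ on the remaining mass $\delta$. That is precisely the loss the tail integration avoids.
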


\begin{proof}[Proof of \Cref{cor-marginal-coverage}]
See \Cref{sec-cor-marginal-coverage-proof}.
\end{proof}

\section{Numerical Experiments}\label{sec-experiments}

In this section, we evaluate the proposed synthetic augmentation methods on a large-scale opinion survey dataset, using LLMs as synthetic simulators. We examine three complementary aspects: (i) coverage validity of the synthetic-augmented confidence intervals; (ii) efficiency gains from synthetic augmentation, measured by the reduction in confidence interval width relative to using real data alone; and (iii) effective synthetic sample size $\weight\cdot\widehat{k}(\weight)$, which quantifies how much synthetic information can be incorporated while maintaining valid inference. The code for implementing the methods and reproducing the experiments is available at \url{https://github.com/ch3702/synthetic-augmentation-frontier}.

\subsection{Experiment Setup}\label{sec-experiments-setup}

\paragraph{Dataset.} We use the \textsc{WorldValuesBench} dataset \citep{ZMT24}, curated from the World Values Survey \citep{WVS20}. The dataset consists of survey questions asked across $64$ countries, covering $12$ themes including social values, security, migration, and so on. We follow the same preprocessing procedure in \cite{ILW25}, which yields $235$ questions. Each question admits a numerical response mapped to $[-1,1]$, and the target parameter is the mean response: $\parammap(\dist)=\EE_{\data\sim\dist}[\data]$. The questions contain responses from a large population of $96{,}200$ individuals. From this population, we draw small real-data samples to simulate the limited-real-data regime. The large population also allows us to compute almost exact population means, which we use as the ground truth when evaluating coverage on held-out questions.

\paragraph{LLMs.} Following the preprocessing and synthetic response generation protocol of \cite{ILW25}, we evaluate two LLMs as synthetic simulators: GPT-4o \citep{GPT4o} and GPT-5 mini \citep{GPT5}. For each LLM and each question, we allow at most $K=200$ synthetic responses to be incorporated into a confidence interval.

\paragraph{Confidence interval construction.} We compute the augmented confidence interval $\CI (\dataset_n,\syndataset_k,\weight)$ using the empirical Bernstein interval given in \eqref{eqn-CI-Bern-aug} and \eqref{eqn-margin-Bern-aug} with range $[a,b]=[-1,1]$. The corresponding real-data confidence interval has the finite-sample coverage property in Assumption \ref{assumption-fallback}.

\paragraph{Proxy construction and hyperparameters.}
We set the target coverage level to $1-\alpha=0.9$. For each question, we take $N=500$ real samples; that is, $n_j=N$ for all $j$. The first $n$ responses are used to construct the augmented confidence interval. For the full-sample proxy, all $N$ responses are used to construct $\CI_j^{\full}$, so the proxy data overlap with the $n$ responses used for augmentation. For the split-sample proxy, the remaining $(N-n)$ responses are held out to construct $\CI_j^{\splt}$ and are therefore independent of the $n$ responses used for augmentation. We take $\gamma=0.5$ for both full-sample and split-sample proxies.

\subsection{Experiment Procedure}\label{sec-experiments-procedure}

We describe the data split and evaluation metrics used to assess the learned frontier.

\paragraph{Calibration and test data.}
Denote the collection of questions by $\metadataset=\{(\query_j,\dataset_j)\}_{j\in\cJ}$. We randomly partition $\cJ$ into a calibration set $\cJ_{\ca}$ and a test set $\cJ_{\te}$ with $|\cJ_{\ca}|:|\cJ_{\te}|=3:2$. For each partition, we also draw a fresh real-response pool for every question. We repeat this procedure over $50$ independent random partitions and response draws, and average all reported metrics across repetitions.

\paragraph{Learning the empirical frontier.}
On the calibration set $\{(\query_j,\dataset_j)\}_{j\in\cJ_{\ca}}$, we apply \Cref{alg-frontier-learning} with either the full-sample or split-sample coverage proxy to learn the empirical frontier. We consider real sample sizes $n\in\{10,30,50\}$; the learned frontier associated with real sample size $n$ is denoted by $\weight\mapsto\widehat{k}_n(\weight)$. The synthetic weight is discretized on the grid $\{\weight_m\}_{m=0}^M = \{0.05r:r=0,1,\ldots,20\}$, and the synthetic sample size ranges over $k\in\{1,\ldots,K\}$ with $K=200$.

\paragraph{Evaluation metrics.}
On the test set $\{(\query_j,\dataset_j)\}_{j\in\cJ_{\te}}$, for each $j\in\cJ_{\te}$, the learned frontier $\weight\mapsto\widehat{k}_n(\weight)$ leads to the synthetic-augmented confidence interval $\CI_{j,n}\big(\weight,\widehat{k}_n(\weight)\big)=\CI\big(\dataset_{j,n},\syndataset_{j,\widehat{k}_n(\weight)},\weight\big)$. We evaluate the learned frontier using the following three metrics.

\begin{enumerate}[label=(\roman*)]
\item \emph{Coverage probability.} Let $\bar{\param}_j$ denote the mean computed from the full pool of real responses for question $j$. As each test question has approximately $100,000$ real responses, we treat $\bar{\param}_j$ as the ground truth. For a configuration $(\weight,k)$, we estimate its coverage over the test set by
\[
\widehat{p}_{n,\te}(\weight,k) =
\frac{1}{|\cJ_{\te}|}
\sum_{j\in\cJ_{\te}}
\one \left(
\bar{\param}_j\in
\CI_{j,n}(\weight,k)
\right).
\]
We report the empirical coverage for the learned frontier, $\widehat{p}_{n,\te}(\weight,\widehat{k}_n(\weight))$, and compare it with the nominal coverage level $1-\alpha=0.9$.

\item \emph{Confidence interval width reduction.} We report the average relative reduction in confidence interval width on the learned frontier compared with the real-data interval,
\[
\overline{R}_n(\weight)
=
\frac{1}{|\cJ_{\te}|}
\sum_{j\in\cJ_{\te}}
\left(
1-
\frac{\Big|\CI_{j,n}\big(\weight,\widehat{k}_n(\weight)\big)\Big|}
{\left|\CI_{j,n}(0,0)\right|}
\right).
\]
Thus, $\overline{R}_n(0)=0$. For $\weight>0$, larger positive values indicate greater efficiency gains from synthetic augmentation.

\item \emph{Effective synthetic sample size.} We report $\weight\cdot\widehat{k}_n(\weight)$, which measures the total amount of synthetic information incorporated into the confidence interval. It accounts jointly for the number of synthetic samples $\widehat{k}_n(\weight)$ and their individual weight $\weight$.
\end{enumerate}

\subsection{Experiment Results}\label{sec-experiments-results}

We present results for our methods using both the full proxy and split proxy.

\paragraph{Learned versus oracle frontiers.} Before turning to the evaluation metrics, we visualize the learned frontier itself. We compare the learned frontier $\weight\mapsto\widehat{k}_n(\weight)$ against the oracle frontier $\weight\mapsto k^*_{n,\te}(\weight)$ for the test set, defined by $k^*_{n,\te}(0)=K$, and for each $m\in[M]$,
\[
k^*_{n,\te}(\weight_m) = \max\left\{
k \in \{0,\ldots, k^*_{n,\te}(\weight_{m-1})\}: ~
\widehat{p}_{n,\te}(\weight_m,i) \ge 1-\alpha
\ \text{for every }i\in[k]
\right\}.
\] 
The oracle frontier represents the largest synthetic sample size such that all smaller sample sizes also attain target coverage across the test questions.

\Cref{fig-frontier} plots the learned frontier $\weight\mapsto\widehat{k}_n(\weight)$ against $\weight\mapsto k_{n,\te}^*(\weight)$ for GPT-4o, for a single calibration--test split and $n\in\{10,30,50\}$. We see that for small weights $\weight$ (around $0.2$ or smaller), both frontiers are capped at the simulation budget $K=200$: even when all $K$ synthetic samples are incorporated, the augmented interval still satisfies the coverage constraint, so the frontier is limited by the budget rather than by validity. 

For larger weights $\weight$, the coverage constraint becomes binding and both frontiers decay steadily as $\weight$ approaches $1$. Across all three sample sizes and both proxy constructions, the learned frontier never exceeds the oracle while staying close to it. Hence \Cref{alg-frontier-learning} recovers a faithful yet slightly conservative estimate of the oracle: it admits nearly as much synthetic data as the oracle frontier without ever overshooting it, which is consistent with the coverage guarantee in \Cref{thm-uniform-coverage}. Moreover, the split-sample proxy yields a frontier marginally closer to the oracle than the full-sample proxy.

\begin{figure}[h]
    \centering
    \caption{Learned and Oracle Size--Weight Frontiers for GPT-4o\label{fig-frontier}}
    {\small (a) Full-sample proxy}\\
    \includegraphics[width=0.95\textwidth]{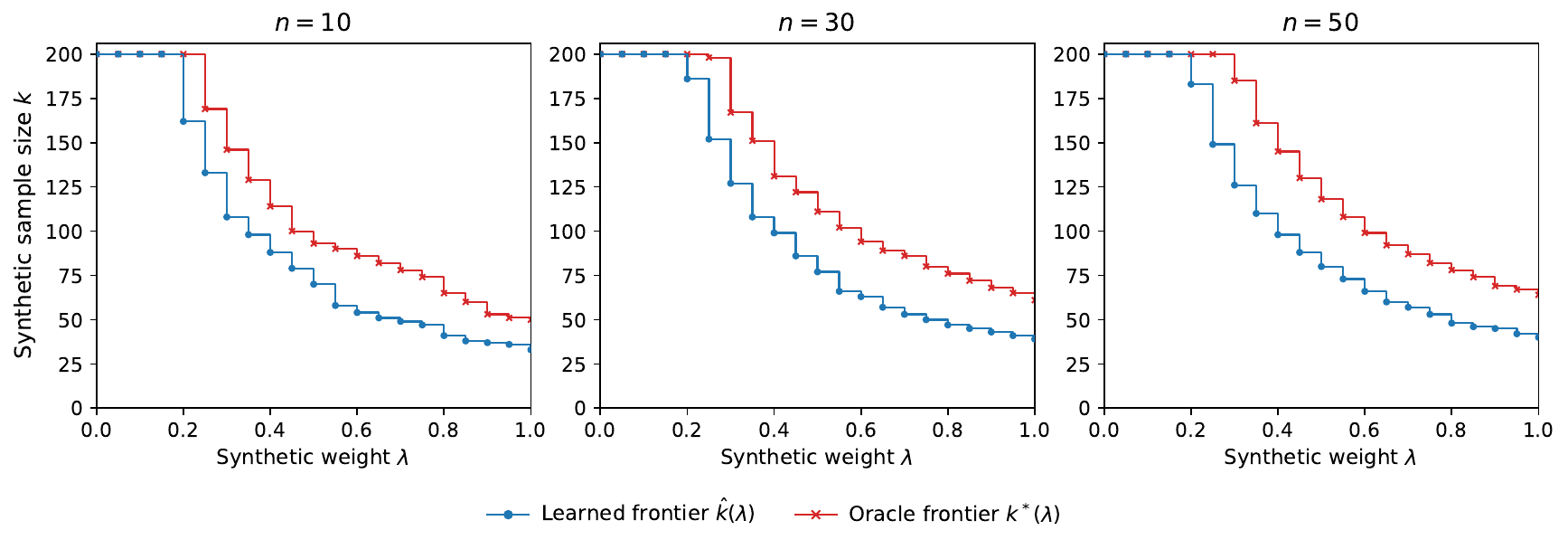}\\
    {\small (b) Split-sample proxy}\\
    \includegraphics[width=0.95\textwidth]{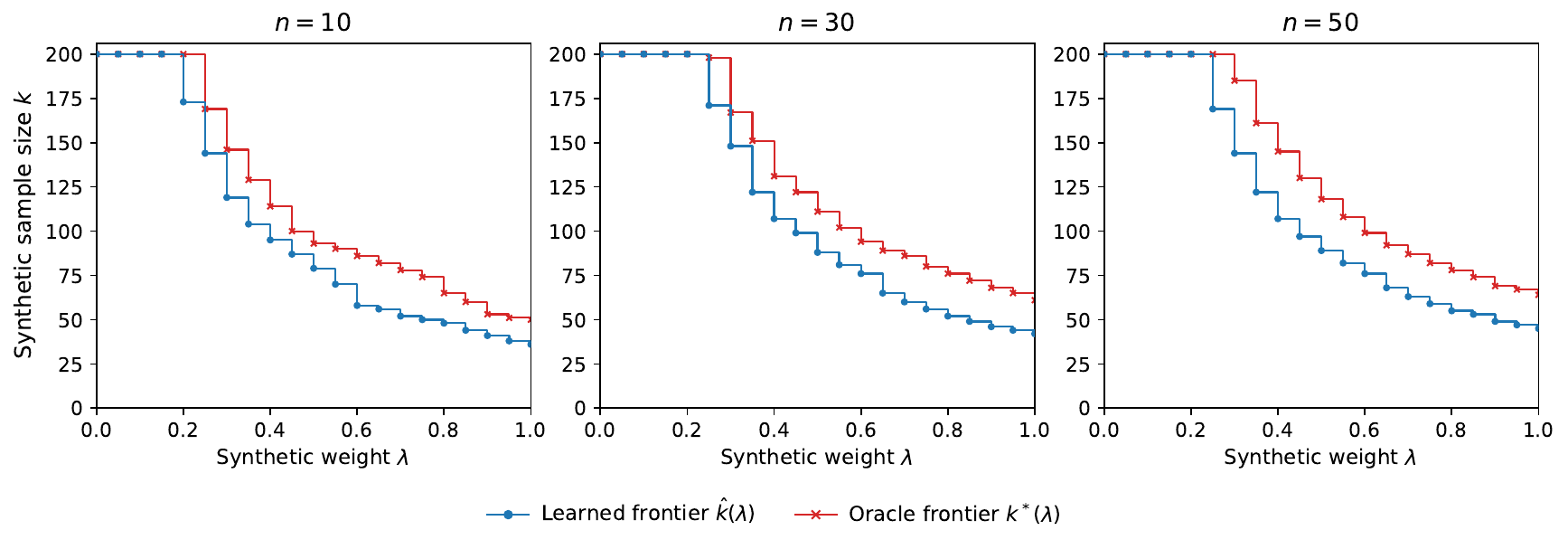}
    \bnotefig{This figure shows the learned frontier $\weight\mapsto\widehat{k}_n(\weight)$ (blue circles) and the oracle frontier $\weight\mapsto k_{n,\te}^*(\weight)$ (red crosses) for GPT-4o at $1-\alpha=0.9$, for a single calibration--test split and $n\in\{10,30,50\}$. Panel (a) uses the full-sample proxy and panel (b) uses the split-sample proxy.}
\end{figure}

\paragraph{Coverage validity.} \Cref{fig-coverage} plots the empirical test coverage $\widehat{p}_{n,\te}(\weight,\widehat{k}_n(\weight))$ at the learned frontier. The gray region indicates the small-$\weight$ regime in which the learned frontier is constrained by the maximum simulation budget $K=200$. Across both LLMs, both proxy constructions, and all real sample sizes $n$, coverage remains above the nominal level $1-\alpha=0.9$ for every evaluated $\weight$. These results are consistent with the uniform coverage guarantee in \Cref{thm-uniform-coverage}. 

\begin{figure}[H]
    \centering
    \caption{Empirical Test Coverage $\widehat{p}_{n,\te}(\weight,\widehat{k}_n(\weight))$ at the Learned Frontier \label{fig-coverage}}
    \includegraphics[width=\textwidth]{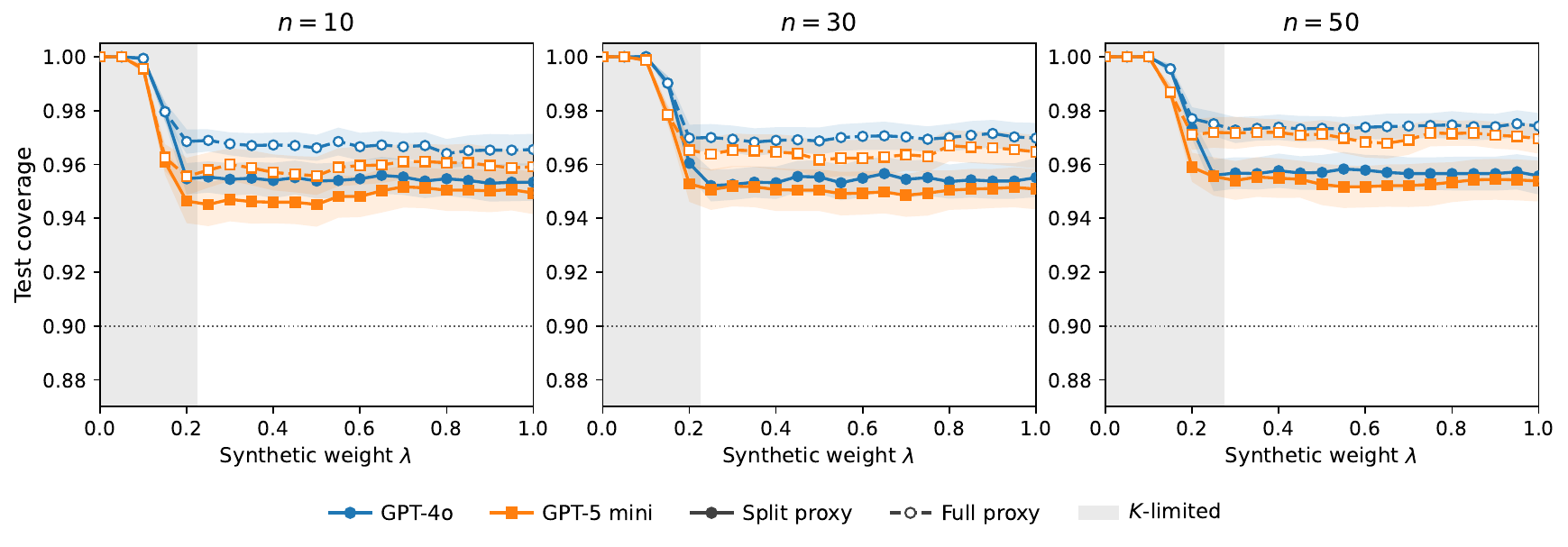}
    \bnotefig{The horizontal dotted line marks the nominal coverage level $1-\alpha=0.9$. Solid lines denote the split-sample proxy and dashed lines denote the full-sample proxy. Error bands have half-width equal to $1.96$ multiplied by the standard errors over $50$ repetitions, and the gray region denotes the regime constrained by the simulation budget $K=200$.}
\end{figure}

\paragraph{Confidence interval width reduction.} \Cref{fig-ciwidth} reports the average relative width reduction $\overline{R}_n(\weight)$. Again, the gray region indicates the small-$\weight$ regime in which the learned frontier is constrained by the maximum simulation budget $K=200$. We observe that synthetic augmentation produces substantial efficiency gains under both proxy constructions. For $n=10$ and $n=30$, the interval width is reduced by roughly $42\%$--$53\%$ once $\weight$ moves beyond the budget-constrained region; for $n=50$, the reduction remains approximately $34\%$--$44\%$. Across both LLMs and all reported values of $n$, the split-sample proxy yields reductions slightly larger than those obtained by the full-sample proxy. The relative improvement becomes smaller at $n=50$, as the real-data-only confidence interval is already narrower when more real observations are available.

\begin{figure}[h]
    \centering
    \caption{Average Relative Confidence Interval Width Reduction $\overline{R}_n(\weight)$\label{fig-ciwidth}}
    \includegraphics[width=\textwidth]{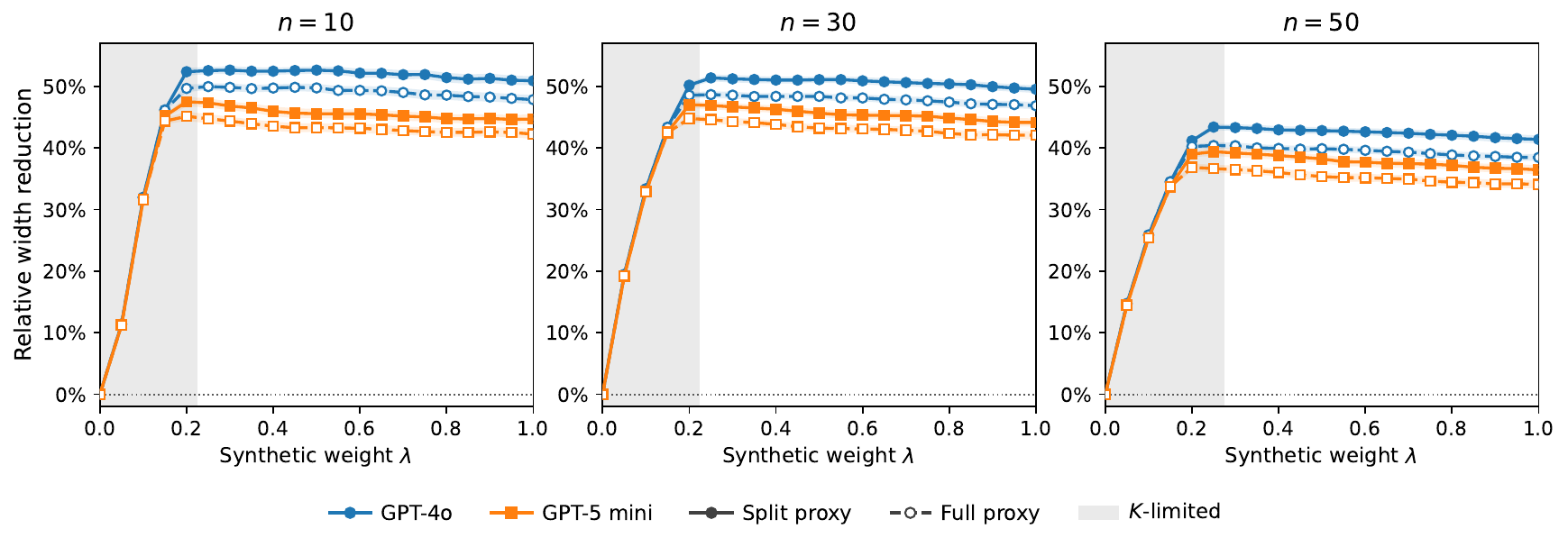}
    \bnotefig{Solid lines denote the split-sample proxy and dashed lines denote the full-sample proxy. Error bands have half-width equal to $1.96$ multiplied by the standard errors over $50$ repetitions, and the gray region denotes the regime constrained by the simulation budget $K=200$.}
\end{figure}

\paragraph{Effective synthetic sample size.} \Cref{fig-n-eff} reports the effective synthetic sample size $\weight\cdot\widehat{k}_n(\weight)$. This quantity reflects the amount of synthetic information that is augmented while preserving statistical validity. For small weights $\weight$, the learned frontier reaches the simulation budget $K$, so the effective size grows as $K\weight$. Away from this regime, e.g., for $\weight \ge 0.25$, the effective sample size $\weight\cdot\widehat{k}_n(\weight)$ is stable across different values of $\weight$. Moreover, the split-sample proxy consistently admits more synthetic information than the full-sample proxy, for both LLMs.

The effective synthetic sample size generally increases with $n$, even though the relative width reduction decreases at larger $n$. These observations are compatible: a larger real sample size $n$ allows the learned frontier to admit more synthetic observations, while the same amount of synthetic information produces smaller improvement over an increasingly precise real-data-only interval. The effective sample size also provides an interpretable measure of simulator fidelity. Under this lens, GPT-4o exhibits greater alignment with the real population than GPT-5 mini on the \textsc{WorldValuesBench} dataset, and the split-sample proxy extracts more synthetic information from both LLMs.

\begin{figure}[h]
    \centering
    \caption{Effective Synthetic Sample Size $\weight\widehat{k}_n(\weight)$ at the Learned Frontier\label{fig-n-eff}}
    \includegraphics[width=\textwidth]{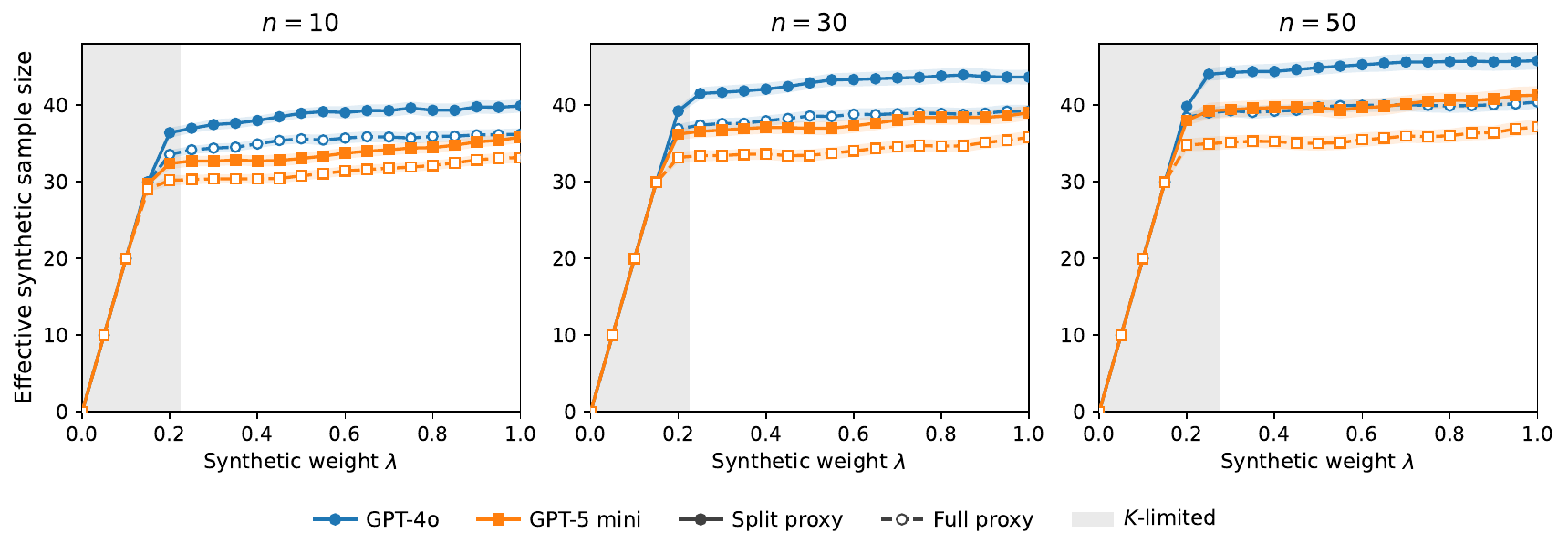}
    \bnotefig{Solid lines denote the split-sample proxy and dashed lines denote the full-sample proxy. Error bands have half-width equal to $1.96$ multiplied by the standard errors over $50$ repetitions, and the gray region denotes the regime constrained by the simulation budget $K=200$.}
\end{figure}

\section{Discussion}\label{sec-discussion}

We introduced a general framework for calibrating both the number of augmented synthetic samples and their weight when constructing confidence sets from scarce real data. The estimated size--weight frontier provides a simultaneous coverage bound for all configurations on or below it, allowing a final configuration to be selected according to a downstream objective based on the calibration data. The theoretical guarantees are marginal over the task population, and rely on the historical and future tasks being independent draws from the same distribution, a prespecified finite weight grid, and comparable tasks and data-generation processes at calibration and deployment. A promising future direction is to develop shift-aware or covariate-conditional frontiers that adapt synthetic augmentation to observable task characteristics and remain reliable under distribution shifts.

\section*{Acknowledgement}
The research is partially supported by the National Science Foundation grants DMS-2515679 and DMS-2544147. The authors used ChatGPT 5.6 Sol Extra High for language editing and Claude Opus 4.7 and 4.8 to assist with code implementation.

\newpage

\appendix
\crefalias{section}{appendix}
\crefalias{subsection}{appendix}

\section{Proofs}

\subsection{Proof of \Cref{lem-proxies}}\label{sec-proofs-proxies}

Fix a configuration $(\weight,k)$, and suppress it from the notation $\CI_{j,n} (\weight,k)$. We first relate the proxies $\widehat p_n^{\full}$ and $\widehat p_n^{\splt}$ to the actual coverage probability $p_n$. For the full proxy, we have
\begin{align*}
\{
\CI_j^{\full}\subseteq\CI_{j,n}
\}
& = 
\{
\param_j\in\CI_{j,n},~\CI_j^{\full}\subseteq\CI_{j,n}
\}
\cup 
\{
\param_j\notin\CI_{j,n},~\CI_j^{\full}\subseteq\CI_{j,n}
\}.
\end{align*}
Hence,
\begin{align*}
\EE\widehat p_n^{\full}(\weight,k)
=
\PP (
\CI_j^{\full}\subseteq\CI_{j,n}
)
\leq 
\PP(
\param_j\in\CI_{j,n}
)
+ \PP (
\param_j\notin\CI_j^{\full}
)
\le
p_n(\weight,k)+ \gamma \alpha.
\end{align*}

To study the split proxy, note that
\[
\{
\param_j\in\CI_j^{\splt},
\ \param_j\notin\CI_{j,n}
\}
\subseteq
\{ 
\CI_j^{\splt}\not\subseteq\CI_{j,n}
\}.
\]
Conditional on $\query_j$, the set estimates $\CI_j^{\splt}$ and $\CI_{j,n}$ are independent. Hence,
\begin{align*}
\PP (
	\param_j\in\CI_j^{\splt},
	\ \param_j\notin\CI_{j,n}
	) 
& =
	\EE\left[
	\PP (
	\param_j\in\CI_j^{\splt}\mid\query_j
) \cdot
	\PP (
	\param_j\notin\CI_{j,n}\mid\query_j
)
	\right]\\
	&\ge
(1 - \gamma )
	\PP (
	\param_j\notin\CI_{j,n}
)
=
(1 - \gamma) [ 1-p_n(\weight,k) ].
\end{align*}
Therefore,
\begin{align*}
\EE\widehat p_n^{\splt}(\weight,k) =
\PP (\CI_j^{\splt} \subseteq\CI_{j,n}) = 1 - \PP (\CI_j^{\splt}\not\subseteq\CI_{j,n})
\leq 1 - (1 - \gamma) [ 1-p_n(\weight,k) ]
\end{align*}
and 
\[
p_n(\weight,k) \geq 
1 - \frac{
1 -	\EE\widehat p_n^{\splt}(\weight,k) 
}{
	1 - \gamma
}
=
\frac{
\EE\widehat p_n^{\splt}(\weight,k)  - \gamma
}{
1 - \gamma
}.
\]

\subsection{Proof of \Cref{thm-uniform-coverage}}\label{sec-thm-uniform-coverage-proof}
Let
\begin{equation*}
	\eta
	=
	\sqrt{
		\frac{\log [ (K \wedge M)/\delta ] }{2J}
	}
	\qquad\text{and}\qquad
	\alpha' = (1-\gamma) \alpha + \eta.
\end{equation*}
Define $\bar{k} (0) = K$ and
\begin{align}
	\bar{k} (\weight_m)
	=
	\max\left\{
	0 \leq k \leq \bar{k} (\weight_{m-1}) :~
	\EE \widehat p_n(\weight_m, i)\ge 1 - \alpha',~\forall i\in[k]
	\right\},
	\qquad m \in [M].
	\label{defn-k-bar}
\end{align}
Define
\begin{align}
	\Omega = \{ \weight_m :~ m \in [M] \text{ and } \bar{k} (\weight_m) < \bar{k} (\weight_{m-1}) \}.
\end{align}
By construction, $\{ \bar{k} (\weight)  \}_{\weight \in \Omega}$ are distinct elements in $\{ 0, 1, \cdots, K-1 \}$. Hence, $|\Omega| \leq K$. We get
\begin{align}
	|\Omega| \leq K \wedge M.
	\label{eqn-Omega-size}
\end{align}
We present two important facts about $\Omega$.
\begin{claim}\label{claim-Omega-bar}
	For any $\weight \in \Omega$, we have $\EE \widehat p_n(\weight,
	\bar{k}(\weight) + 1
	) < 1 - \alpha'$.
\end{claim}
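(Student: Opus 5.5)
The plan is to argue directly from the recursive definition \eqref{defn-k-bar} of $\bar k$, by contradiction. Fix $\weight=\weight_m\in\Omega$. Membership in $\Omega$ means $\bar k(\weight_m)<\bar k(\weight_{m-1})$, and in particular $\bar k(\weight_m)+1\leq \bar k(\weight_{m-1})\leq K$. So $\bar k(\weight_m)+1$ lies in the admissible range $\{0,\ldots,\bar k(\weight_{m-1})\}$ of the maximization at step $m$, and $\EE\widehat p_n(\weight_m,\bar k(\weight_m)+1)$ is well defined.

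Now suppose, for contradiction, that $\EE\widehat p_n(\weight_m,\bar k(\weight_m)+1)\geq 1-\alpha'$. Since $\bar k(\weight_m)$ is feasible in \eqref{defn-k-bar}, we already have $\EE\widehat p_n(\weight_m,i)\geq 1-\alpha'$ for every $i\in[\bar k(\weight_m)]$. Combining this with the assumed inequality at $i=\bar k(\weight_m)+1$ shows that the condition holds for every $i\in[\bar k(\weight_m)+1]$. Together with the range check from the first step, this makes $k=\bar k(\weight_m)+1$ feasible in the maximization defining $\bar k(\weight_m)$. That contradicts maximality, so the strict inequality $\EE\widehat p_n(\weight_m,\bar k(\weight_m)+1)<1-\alpha'$ must hold.

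There is no real obstacle here; the claim is essentially bookkeeping. The only point needing care is the role of $\Omega$: it is exactly what guarantees that $\bar k(\weight)+1$ does not exceed the cap $\bar k(\weight_{m-1})$. Without this, maximality could fail to hold merely because of the cap rather than because of the coverage condition, and no inequality on $\EE\widehat p_n$ would follow.
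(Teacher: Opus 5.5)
Your proof is correct and follows the paper's argument. Both use maximality, so $\bar k(\weight_m)+1$ is not feasible, and membership in $\Omega$, so $\bar k(\weight_m)+1\le \bar k(\weight_{m-1})$, to conclude that the condition must fail at $i=\bar k(\weight_m)+1$; your contradiction framing also makes explicit that the condition holds for all $i\in[\bar k(\weight_m)]$, which the paper leaves implicit.
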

\begin{proof}
	Choose any $\weight \in \Omega$. There exists $m \in [M]$ such that $\weight = \weight_m$. By definition, we have
	\begin{align*}
		\bar{k} (\weight_m)
		=
		\max U
		,\qquad\text{where}\qquad
		U =	\left\{
		0 \leq k \leq \bar{k} (\weight_{m-1}) :~
		\EE \widehat p_n(\weight_m, i)\ge 1 - \alpha' ,~\forall i\in[k]
		\right\}.
	\end{align*}
	Therefore, $\bar{k} (\weight_m) + 1 \notin U$. It follows from $\weight_{m} \in \Omega$ that $\bar{k} (\weight_m) + 1 \leq \bar{k} (\weight_{m-1})$. Hence, we must have $\EE \widehat p_n(\weight_m, 
	\bar{k} (\weight_m) + 1
	) < 1 - \alpha' $.
\end{proof}

\begin{claim}\label{claim-Omega-hat}
	If $\widehat{k} \leq \bar{k}$ holds on $\Omega$, then the inequality holds over the entire set $\Lambda$.
\end{claim}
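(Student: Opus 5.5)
We argue by induction on $m\in\{0,1,\ldots,M\}$, proving $\widehat k(\weight_m)\le\bar k(\weight_m)$ for every $m$. The base case $m=0$ is immediate, since $\widehat k(\weight_0)=\bar k(\weight_0)=K$.

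For the inductive step, fix $m\in[M]$ and assume $\widehat k(\weight_{m-1})\le\bar k(\weight_{m-1})$. We split into two cases according to whether $\weight_m$ lies in $\Omega$.

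\emph{Case $\weight_m\in\Omega$.} Here the desired inequality $\widehat k(\weight_m)\le\bar k(\weight_m)$ is exactly the hypothesis of the claim, so nothing needs to be shown.

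\emph{Case $\weight_m\notin\Omega$.} By the definition of $\Omega$, we have $\bar k(\weight_m)\ge\bar k(\weight_{m-1})$. The definition \eqref{defn-k-bar} also forces $\bar k(\weight_m)\le\bar k(\weight_{m-1})$, so in fact $\bar k(\weight_m)=\bar k(\weight_{m-1})$. The recursion \eqref{eq-frontier-est} caps $\widehat k(\weight_m)$ at $\widehat k(\weight_{m-1})$. Combining this with the induction hypothesis gives
\[
\widehat k(\weight_m)\le\widehat k(\weight_{m-1})\le\bar k(\weight_{m-1})=\bar k(\weight_m).
\]

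Both cases give $\widehat k(\weight_m)\le\bar k(\weight_m)$, which completes the induction. The argument only uses the monotone caps built into the two recursions, so no step is genuinely delicate. The one point to watch is the second case: membership of $\weight_m$ in $\Omega$ is defined through the strict inequality $\bar k(\weight_m)<\bar k(\weight_{m-1})$, and the equality $\bar k(\weight_m)=\bar k(\weight_{m-1})$ for $\weight_m\notin\Omega$ needs both the definition of $\Omega$ and the upper cap in \eqref{defn-k-bar}.
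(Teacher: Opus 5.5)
Your induction is correct and is essentially the paper's argument. The paper fixes $\weight\in\Lambda$ and takes $w=\max\{\weight'\in\Omega:\weight'\le\weight\}$, falling back to $\bar k(\weight)=K$ when no such $w$ exists; it then chains $\widehat k(\weight)\le\widehat k(w)\le\bar k(w)=\bar k(\weight)$, which is your one-step argument unrolled between consecutive elements of $\Omega$. Your version also makes explicit what the paper only asserts: $\bar k$ stays constant across indices not in $\Omega$ because of the cap $\bar k(\weight_m)\le\bar k(\weight_{m-1})$.
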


\begin{proof}
	Choose any $\weight \in \Lambda$. If $\Omega = \varnothing$ or $\min \Omega > \weight$, then $\bar{k} (\weight) = \bar{k} (\weight_0) = K \geq \widehat{k} (\weight)$. 
	Otherwise, let $w = \max \{ \weight' \in \Omega :~ \weight' \leq \weight \}$. The construction of $\Omega$ forces $\bar{k} (\weight) = \bar{k}(w)$. Since $\widehat{k} \leq \bar{k}$ holds on $\Omega$, $\bar{k} (w) \geq \widehat{k}(w)$. The monotonicity of $\widehat{k}$ implies that $\widehat{k}(w) \geq \widehat{k}(\weight)$. Combining the above estimates gives $\bar{k} (\weight) \geq \widehat{k}(\weight)$.
\end{proof}

From Claim \ref{claim-Omega-hat}, we obtain that
\begin{align*}
	\PP \Big( \widehat{k} (\weight) \leq \bar{k} (\weight), ~ \forall \weight \in \Lambda \Big)
	\geq \PP \Big( \widehat{k} (\weight) \leq \bar{k}(\weight), ~ \forall \weight \in \Omega \Big)
	\geq 1 - \sum_{\weight \in \Omega } 
	\PP \Big( \widehat{k} (\weight) > \bar{k}(\weight)\Big).
\end{align*}
According to the definition of $\widehat{k}$ in \eqref{eq-frontier-est}, we have
\[ 
\{
\widehat{k} (\weight) > \bar{k}(\weight)
\}
=
\{
\widehat{k} (\weight) \geq \bar{k}(\weight) + 1
\}
\subseteq 
\{
\widehat p_n(\weight,
\bar{k}(\weight) + 1
) \geq 1-(1-\gamma)\alpha
\}
, \qquad \forall \weight \in \Omega
\]
Then, Claim \ref{claim-Omega-bar} forces 
\[ 
\{
\widehat{k} (\weight) > \bar{k}(\weight)
\}
\subseteq 
\{
\widehat p_n(\weight,
\bar{k}(\weight) + 1
) 
- \EE \widehat p_n(\weight,
\bar{k}(\weight) + 1
) 
\geq \eta
\}
, \qquad \forall \weight \in \Omega.
\]

Fix any $\weight \in \Omega$. By Assumption \ref{assumption-independence}, $\widehat p_n(\weight,
\bar{k}(\weight) + 1
)$ is the average of $J$ independent Bernoulli random variables. Hoeffding's inequality (Theorem 2.8 in \cite{BLM13}) implies that
\[
\PP \Big(
\widehat p_n(\weight,
\bar{k}(\weight) + 1
) 
- \EE \widehat p_n(\weight,
\bar{k}(\weight) + 1
) 
\geq \eta
\Big)
\leq 
e^{-2 J \eta^2}.
\]
By a union bound and \eqref{eqn-Omega-size},
\begin{align}
	\PP \Big( \widehat{k} (\weight) \leq \bar{k} (\weight), ~ \forall \weight \in \Lambda \Big)
	\geq 1 - |\Omega| e^{-2 J \eta^2}
	\geq 1 - (K \wedge M) e^{-2 J \eta^2} = 1 - \delta.
\end{align}
Denote by $\cA$ the event on the left-hand side. When $\cA$ happens, the definition of $\bar{k}$ in \eqref{defn-k-bar} and Assumption \ref{assumption-fallback} imply that
\[
\min_{
	\weight \in \Lambda,~
	0 \leq k \leq  \widehat{k} (\weight)
} 
\EE \widehat p_n(\weight, k) \geq 1 - \alpha'.
\]
By \Cref{lem-proxies},
\[
\min_{
	\weight \in \Lambda,~
	0 \leq k \leq  \widehat{k} (\weight)
} 
p_n(\weight, k) \geq 
\begin{cases}
	1 - \alpha - \eta &  \text{for the full-sample proxy}\\
	1 - \alpha - (1 - \gamma)^{-1} \eta & \text{for the split proxy}
\end{cases}.
\]

\subsection{Proof of \Cref{cor-marginal-coverage}}\label{sec-cor-marginal-coverage-proof}

Define the random variable
\[
p = \min_{
	\weight \in \Lambda,~
	0 \leq k \leq  \widehat{k} (\weight)
} 
\PP \Big(
\param_{\query}\in
\CI(\dataset_n,\syndataset_{  k}, \weight)
\Big).
\]
We have 
\begin{align}
\PP \Big(
\param_{\query}\in
\CI(\dataset_n,\syndataset_{\widetilde k},\widetilde\weight)
\Big)
\geq \EE p
= 1 - \alpha - \EE (1-\alpha-p )
.
\label{eqn-cor-coverage}
\end{align}
\Cref{thm-uniform-coverage} implies that for any $\delta \in (0,1)$, the following holds with probability at least $1 - \delta$:
\[
p
\geq 
	1 - \alpha - c 	\sqrt{
		\frac{\log [ (K \wedge M)/\delta ] }{2J}
	}.
\]
We obtain that
\[
\PP
(
1-\alpha-p >t
)
\le
\min \{
1,
(K \wedge M) e^{ - 2Jt^2 / c^2 }
\},
\qquad  \forall t\ge0.
\]
Define $u = c \sqrt{ \frac{\log (K \wedge M) }{2J} }$. We have
\begin{align*}
\EE (1-\alpha-p )
& \leq
\EE (1-\alpha-p )_+
= \int_{0}^{\infty} \PP
\Big(
(1-\alpha-p)_+>t
\Big) \rd t \\
& \leq 
	\int_0^\infty
	\min \{
	1,
	(K \wedge M) e^{ - 2Jt^2 / c^2 }
	\}
	\rd t
= u + \int_u^\infty
(K \wedge M) e^{ - 2Jt^2 / c^2 }
\rd t  \\
& \leq u + 
\int_u^\infty
\frac{t}{u} \cdot 
(K \wedge M) e^{ - 2Jt^2 / c^2 }
\rd t 
= u + 
\frac{c^2 (K \wedge M)  }{4 J u} 
\int_u^\infty
\rd (- e^{ - 2Jt^2 / c^2 } ) \\
& = u + \frac{c^2 (K \wedge M)  }{4 J u}  e^{-2 J u^2 / c^2}
= u + \frac{c^2 }{4 J u}  
= u \bigg(
1 + \frac{c^2 }{4 J u^2}  
\bigg)
\\& 
= 
c \sqrt{
\frac{
\log ( K \wedge M )
}{2J}
}
\bigg(
1
+\frac{1}{ 2  \log ( K \wedge M )} 
\bigg)
 \leq c \sqrt{
	\frac{
2 \log ( K \wedge M )
	}{J}
}.
\end{align*}
The last inequality follows from the fact that $2 \log ( K \wedge M ) \geq 2 \log 2 > 1$. Plugging the above bound into \eqref{eqn-cor-coverage} completes the proof.

{
\bibliographystyle{ims}
\bibliography{bib}

\begin{thebibliography}{25}
\expandafter\ifx\csname natexlab\endcsname\relax\def\natexlab#1{#1}\fi
\expandafter\ifx\csname url\endcsname\relax
  \def\url#1{\texttt{#1}}\fi
\expandafter\ifx\csname urlprefix\endcsname\relax\def\urlprefix{URL }\fi

\bibitem[{Abdel-Azim et~al.(2026)Abdel-Azim, Wang and Lin}]{AWL26}
\textsc{Abdel-Azim, A.}, \textsc{Wang, R.} and \textsc{Lin, X.} (2026).
\newblock Harnessing synthetic data from generative {AI} for statistical
  inference.
\newblock \textit{arXiv preprint arXiv:2603.05396} .

\bibitem[{Angelopoulos et~al.(2023)Angelopoulos, Bates, Fannjiang, Jordan and
  Zrnic}]{ABF23}
\textsc{Angelopoulos, A.~N.}, \textsc{Bates, S.}, \textsc{Fannjiang, C.},
  \textsc{Jordan, M.~I.} and \textsc{Zrnic, T.} (2023).
\newblock Prediction-powered inference.
\newblock \textit{Science} \textbf{382} 669--674.

\bibitem[{Bashari et~al.(2026)Bashari, Lee, Lotan, Dobriban and Romano}]{BLL26}
\textsc{Bashari, M.}, \textsc{Lee, Y.}, \textsc{Lotan, R.~M.},
  \textsc{Dobriban, E.} and \textsc{Romano, Y.} (2026).
\newblock General synthetic-powered inference.
\newblock In \textit{Forty-third International Conference on Machine Learning}.

\bibitem[{Bashari et~al.(2025)Bashari, Lotan, Lee, Dobriban and Romano}]{BLL25}
\textsc{Bashari, M.}, \textsc{Lotan, R.~M.}, \textsc{Lee, Y.},
  \textsc{Dobriban, E.} and \textsc{Romano, Y.} (2025).
\newblock Synthetic-powered predictive inference.
\newblock In \textit{The Thirty-ninth Annual Conference on Neural Information
  Processing Systems}.

\bibitem[{Boucheron et~al.(2013)Boucheron, Lugosi and Massart}]{BLM13}
\textsc{Boucheron, S.}, \textsc{Lugosi, G.} and \textsc{Massart, P.} (2013).
\newblock \textit{Concentration Inequalities: A Nonasymptotic Theory of
  Independence}.
\newblock Oxford University Press.

\bibitem[{Byun et~al.(2026)Byun, Gupta, Lipton, Childers and Wilder}]{BGL26}
\textsc{Byun, Y.}, \textsc{Gupta, S.}, \textsc{Lipton, Z.}, \textsc{Childers,
  R.} and \textsc{Wilder, B.} (2026).
\newblock Valid inference with imperfect synthetic data.
\newblock \textit{Advances in Neural Information Processing Systems}
  \textbf{38} 162430--162469.

\bibitem[{Haerpfer et~al.(2020)Haerpfer, Inglehart, Moreno, Welzel, Kizilova,
  Diez-Medrano, Lagos, Norris, Ponarin, Puranen et~al.}]{WVS20}
\textsc{Haerpfer, C.}, \textsc{Inglehart, R.}, \textsc{Moreno, A.},
  \textsc{Welzel, C.}, \textsc{Kizilova, K.}, \textsc{Diez-Medrano, J.},
  \textsc{Lagos, M.}, \textsc{Norris, P.}, \textsc{Ponarin, E.},
  \textsc{Puranen, B.} \textsc{et~al.} (2020).
\newblock World values survey: Round seven -- country-pooled datafile
  (2017-2020).
\newline\urlprefix\url{https://doi.org/10.14281/18241.1}

\bibitem[{Huang et~al.(2025)Huang, Wu and Wang}]{HWW25}
\textsc{Huang, C.}, \textsc{Wu, Y.} and \textsc{Wang, K.} (2025).
\newblock How many human survey respondents is a large language model worth? an
  uncertainty quantification perspective.
\newblock \textit{arXiv preprint arXiv:2502.17773} .

\bibitem[{Huang et~al.(2020)Huang, Stein, Rubin and Kou}]{HSR20}
\textsc{Huang, D.}, \textsc{Stein, N.}, \textsc{Rubin, D.~B.} and \textsc{Kou,
  S.} (2020).
\newblock Catalytic prior distributions with application to generalized linear
  models.
\newblock \textit{Proceedings of the National Academy of Sciences} \textbf{117}
  12004--12010.

\bibitem[{Iyengar et~al.(2025)Iyengar, Lin and Wang}]{ILW25}
\textsc{Iyengar, G.}, \textsc{Lin, Y.-S.~W.} and \textsc{Wang, K.} (2025).
\newblock Model-free assessment of simulator fidelity via quantile curves.
\newblock \textit{arXiv preprint arXiv:2512.05024} .

\bibitem[{Ji et~al.(2025)Ji, Pan, Zhu and Lei}]{JPL25}
\textsc{Ji, W.}, \textsc{Pan, Y.}, \textsc{Zhu, R.} and \textsc{Lei, L.}
  (2025).
\newblock Multi-armed bandits with machine learning-generated surrogate
  rewards.
\newblock \textit{arXiv preprint arXiv:2506.16658} .

\bibitem[{Keret and Shojaie(2025)}]{KSh25}
\textsc{Keret, N.} and \textsc{Shojaie, A.} (2025).
\newblock {GLM} inference with {AI}-generated synthetic data using misspecified
  linear regression.
\newblock \textit{arXiv preprint arXiv:2503.21968} .

\bibitem[{Lu et~al.(2026)Lu, Wang, Zhang and Zhang}]{LWZ26}
\textsc{Lu, C.}, \textsc{Wang, M.}, \textsc{Zhang, D.~J.} and \textsc{Zhang,
  H.} (2026).
\newblock Generative augmented inference of {LLM}-generated data for market
  research: Theory and empirical evidence.
\newblock \textit{arXiv preprint arXiv:2604.14575} .

\bibitem[{Ma and Zhang(2026)}]{MZh26}
\textsc{Ma, Z.} and \textsc{Zhang, A.~R.} (2026).
\newblock Synthetic augmentation in imbalanced learning: When it helps, when it
  hurts, and how much to add.
\newblock \textit{arXiv preprint arXiv:2601.16120} .

\bibitem[{Maurer and Pontil(2009)}]{MPo09}
\textsc{Maurer, A.} and \textsc{Pontil, M.} (2009).
\newblock Empirical {Bernstein} bounds and sample variance penalization.
\newblock \textit{Proceedings of the 22nd Annual Conference on Learning Theory}
  .

\bibitem[{Nakada et~al.(2024)Nakada, Xu, Li and Zhang}]{NXL24}
\textsc{Nakada, R.}, \textsc{Xu, Y.}, \textsc{Li, L.} and \textsc{Zhang, L.}
  (2024).
\newblock Synthetic oversampling: Theory and a practical approach using {LLM}s
  to address data imbalance.
\newblock \textit{arXiv preprint arXiv:2406.03628} .

\bibitem[{O'Hagan and Ro{\v{c}}kov{\'a}(2025)}]{ORo25}
\textsc{O'Hagan, S.} and \textsc{Ro{\v{c}}kov{\'a}, V.} (2025).
\newblock {AI}-powered {B}ayesian inference.
\newblock \textit{arXiv preprint arXiv:2502.19231} .

\bibitem[{OpenAI(2024)}]{GPT4o}
\textsc{OpenAI} (2024).
\newblock Hello {GPT-4o}.
\newline\urlprefix\url{https://openai.com/index/hello-gpt-4o/}

\bibitem[{OpenAI(2025)}]{GPT5}
\textsc{OpenAI} (2025).
\newblock {Introducing GPT-5}.
\newline\urlprefix\url{https://openai.com/index/introducing-gpt-5/}

\bibitem[{Shen et~al.(2023)Shen, Liu and Shen}]{SLS23}
\textsc{Shen, X.}, \textsc{Liu, Y.} and \textsc{Shen, R.} (2023).
\newblock Boosting data analytics with synthetic volume expansion.
\newblock \textit{arXiv preprint arXiv:2310.17848} .

\bibitem[{Tan and Zrnic(2026)}]{TZr26}
\textsc{Tan, L.} and \textsc{Zrnic, T.} (2026).
\newblock Valid inference with synthetic data via task exchangeability.
\newblock \textit{arXiv preprint arXiv:2606.13629} .

\bibitem[{Wang et~al.(2026)Wang, Zhang and Zhang}]{WZZ26}
\textsc{Wang, M.}, \textsc{Zhang, D.~J.} and \textsc{Zhang, H.} (2026).
\newblock Large language models for market research: A data-augmentation
  approach.
\newblock \textit{Marketing Science} \textbf{45} 728--751.

\bibitem[{Ye et~al.(2026)Ye, Lyu and Tao}]{YLT26}
\textsc{Ye, Z.}, \textsc{Lyu, J.} and \textsc{Tao, R.} (2026).
\newblock Allocating human oversight in {AI}-enabled analytics.
\newblock \textit{arXiv preprint arXiv:2604.12497} .

\bibitem[{Yin and Xin(2026)}]{YXi26}
\textsc{Yin, Q.} and \textsc{Xin, L.} (2026).
\newblock Synthetic but not infinite: How much {LLM}-generated data to use in
  market research.
\newblock \textit{Available at SSRN 6078686} .

\bibitem[{Zhao et~al.(2024)Zhao, Mondal, Tandon, Dillion, Gray and Gu}]{ZMT24}
\textsc{Zhao, W.}, \textsc{Mondal, D.}, \textsc{Tandon, N.}, \textsc{Dillion,
  D.}, \textsc{Gray, K.} and \textsc{Gu, Y.} (2024).
\newblock {W}orld{V}alues{B}ench: A large-scale benchmark dataset for
  multi-cultural value awareness of language models.
\newblock In \textit{Proceedings of the 2024 Joint International Conference on
  Computational Linguistics, Language Resources and Evaluation (LREC-COLING
  2024)}. ELRA and ICCL.

\end{thebibliography}
}

\end{document}